\setlist{noitemsep} 
\definecolor{mylinkcolor}{rgb}{0,0,0.4} 
\def\pdfstartlink@attr{attr{/Border[0 0 0 [1 5] ]/H/I/C[0 1 1]}}%
\def\@@Doi#1{\textcolor{mylinkcolor}{#1}\@@endlink}
\def\ForTexCount\section#1{} 
\newtheorem{thm}{Theorem}
\newtheorem{prop}[thm]{Proposition}
\newtheorem{lemma}[thm]{Lemma}
\newtheorem{cor}[thm]{Corollary}
\newtheorem{definition}[thm]{Definition}
\newcommand\bibalias[2]{%
  \@namedef{bibali@#1}{#2}%
}
\newtoks\biba@toks
\let\bibalias@oldcite\cite{}
\def\cite{%
  \@ifnextchar[{%
    \biba@cite@optarg%
  }{%
    \biba@cite{}%
  }%
}
\newcommand\biba@cite@optarg[2][]{%
  \biba@cite{[#1]}{#2}%
}
\newcommand\biba@cite[2]{%
  \biba@toks{\bibalias@oldcite#1}%
  \def\biba@comma{}%
  \def\biba@all{}%
  \@for\biba@one:=#2\do{%
    \@ifundefined{bibali@\biba@one}{%
      \edef\biba@all{\biba@all\biba@comma\biba@one}%
    }{%
      \PackageInfo{bibalias}{%
        Replacing citation `\biba@one' with `\@nameuse{bibali@\biba@one}'
      }%
      \edef\biba@all{\biba@all\biba@comma\@nameuse{bibali@\biba@one}}%
    }%
    \def\biba@comma{,}%
  }%
  %
  %
  \immediate\write\@auxout{\noexpand\bgroup\noexpand\renewcommand\noexpand\citation[1]{}\noexpand\citation{#2}\noexpand\egroup}%
  %
  %
  \edef\biba@tmp{\the\biba@toks{\biba@all}}%
  \biba@tmp%
}
\definecolor{myorange}{rgb}{0.8,0.1,0}
\begin{document}

\title{Axiomatic relation between thermodynamic and information-theoretic entropies}
\author{Mirjam Weilenmann}
\affiliation{Institute for Theoretical Physics, ETH Zurich, 8093 Switzerland}
\affiliation{Department of Mathematics, University of York, Heslington, York, YO10 5DD, UK}
\author{Lea Kraemer}
\affiliation{Institute for Theoretical Physics, ETH Zurich, 8093 Switzerland}
\author{Philippe Faist}
\affiliation{Institute for Theoretical Physics, ETH Zurich, 8093 Switzerland}
\affiliation{Institute for Quantum Information and Matter, California Institute of Technology, Pasadena 91125, USA}
\author{Renato Renner}
\affiliation{Institute for Theoretical Physics, ETH Zurich, 8093 Switzerland}
\date{\today}

\begin{abstract} 
  Thermodynamic entropy, as defined by Clausius, characterises macroscopic
  observations of a system based on phenomenological quantities such as temperature and heat.
  In contrast, information-theoretic entropy, introduced by Shannon, is a measure of
  uncertainty. In this Letter, we connect these two notions of entropy, using an axiomatic framework for thermodynamics
  [Lieb, Yngvason, Proc.\@ Roy.\@ Soc.\@ (2013)].
  In particular, we obtain a direct relation between the Clausius
  entropy and the Shannon entropy, or its generalisation to quantum systems, the von
  Neumann entropy. More generally, we find that entropy measures relevant in
  non-equilibrium thermodynamics correspond to entropies used in one-shot information
  theory.
\end{abstract}

\maketitle

\ForTexCount\section{Main Text}
Entropy plays a central role both in thermodynamics and in information theory. 
This is remarkable, as the two theories appear to be fundamentally different. 
Thermodynamics is a phenomenological theory, concerned with the description of large physical systems, 
such as steam engines or fridges. 
It relies on concepts  like work or heat, which are defined in terms of macroscopic observables. 
Information theory, on the other hand, deals with ``knowledge'' on a rather abstract level. 
Like statistical mechanics, it refers to the microscopic states of  a system, 
such as the values of the individual bits stored in a memory device.  

Accordingly, the notion of entropy is rather different in the two theories. In
thermodynamics, entropy is a function of the macroscopic state of a physical system
which describes phenomenologically which processes are possible independently of any
microscopic model.
Following Clausius, it is conventionally defined in terms of the heat transfer into a
system at a given temperature, and it lends its operational significance from the second
law~\cite{Clausius1854AdP_veraenderte,Clausius1865AdP_verschiedene}. In information
theory, entropy was originally introduced by Shannon to quantify the information content
of data or, equivalently, the uncertainty one has about them~\cite{Shannon1948BSTJ}.
Operationally, it characterises properties such as the compression length, i.e.\ the
minimum number of bits needed to store the data. Mathematically, the Shannon entropy is a
function of the probability distribution of the random variable that models the data. The
von Neumann entropy~\cite{vNeumann1932} provides a generalisation of this concept to the case where information is represented by the state of a quantum-mechanical system.

The information theoretic entropy is formally equivalent to the entropy function from statistical mechanics. This relation is conceptually justified through
Landauer's
principle~\cite{Landauer1961_5392446Erasure,Bennett1982IJTP_ThermodynOfComp,BookLeffRex2010}.
It entails that the loss of information
in an erasure operation on a system, and hence the decrease of its information-theoretic entropy, is paired with a heat dissipation in the system's environment.
Arguments for Landauer's principle 
start from microscopic considerations, 
for 
example using standard tools from statistical
mechanics~\cite{Piechocinska2000PRA,Shizume1995_HeatGeneration,Reeb2014NJP_improved}, or
explicit microscopic models inspired from information theory~\cite{%
  Janzing2000_cost,%
  Dahlsten2011NJP_inadequacy,%
  DelRio2011,%
  Brandao2013_resource,%
  Aberg2013_worklike,%
  Horodecki2013_ThermoMaj,%
  Egloff2012,%
  Faist2015,%
  Skrzypczyk2014NComm_individual,%
  Brandao2015,%
  Halpern2014%
}.
An alternative view on the conceptual connection of these entropy measures was proposed by Jaynes in terms of his maximum entropy principle~\cite{Jaynes1957}.
See also~\protect\cite{Maxwell1871,%
Tribus1971,%
Lindblad1973,%
Wherl1978,%
Bennett1982IJTP_ThermodynOfComp} for a selection of related approaches.

In this Letter, we take a different approach: we show that the information-theoretic entropy results from applying the definition of thermodynamic entropy 
to quantum resource theories. This connection is not based on a model borrowed from one particular theory, instead it relates the theory of information to
that of phenomenological thermodynamics on an
axiomatic level.

Our approach relies on a framework by Lieb and
Yngvason~\cite{Lieb1998,Lieb1999,Lieb2013,Lieb2014}, who give a derivation of
thermodynamics, and in particular of the thermodynamic entropy, based on abstract axioms.
These axioms identify basic properties of a thermodynamic system, which we find to be also fulfilled in the context of resource theories.  As
a consequence, there is also in this context a state function analogous to the
thermodynamic entropy. As we show, the state function in question is none else than the
information-theoretic entropy itself. This provides a novel connection between the
thermodynamic and information-theoretic entropies.

In particular, this connection extends to the min- and max-entropy, ``single-shot'' generalizations of the von Neumann entropy.
These have been introduced to characterize single instances of information-theoretic tasks~\cite{PhdRenner2005_SQKD,Koenig2009IEEE_OpMeaning}.
We show that they, too, can be obtained from the same axiomatic approach as thermodynamic entropy.  In order to
demonstrate this, we consider an extension of Lieb and Yngvason's
framework to nonequilibrium states~\cite{Lieb2013_entropy_noneq},
with which these entropy measures are recovered.
Our work bears some
resemblance to the study of entanglement theory using similar
tools~\cite{Vedral2002PRL_uniqueness,Brandao2008_secondlaw,Brandao2010CMP_reversible}.

The remainder of this Letter is organised as follows. We start with a summary of the
Lieb-Yngvason framework for thermodynamics. 
As a first technical contribution, we show that the framework is applicable to
a microscopic description of thermodynamic systems by resorting to a quantum resource theory.
We then show that the corresponding entropy
measures defined within the framework coincide with information-theoretic entropies
(Proposition~\ref{prop:entropies}). We subsequently extend these considerations to
other thermodynamic quantities such as the Helmholtz free energy  (Proposition~\ref{prop:reservoir}).

\textit{Lieb and Yngvason's Approach.}---The axiomatic framework by Lieb and Yngvason
follows a history of
developments towards a
mathematically rigorous treatment of
thermodynamics~\cite{%
  Caratheodory1909,%
  Giles1964,%
  Lieb1998,%
  Lieb1999,%
  Lieb2013,%
  gyftopoulos1991thermodynamics,%
  Gyftopoulos2005,%
  Zanchini2011_thermodynamics,%
  Zanchini2014,%
  Hatsopoulos1,%
  Hatsopoulos2a,%
  Hatsopoulos2b,%
  Hatsopoulos3%
}.
Lieb and Yngvason~\cite{Lieb1998,Lieb1999,Lieb2013}
consider the set $\Gamma$ of all equilibrium states of a thermodynamic system and equip
this space with an order relation, denoted by $\prec$.  For $X$ and $Y \in \Gamma$, $X \prec
Y$ means that the state $Y$ is ``adiabatically accessible'' from the state $X$
``by means of an interaction with some device consisting of some auxiliary system and a weight in such a way that the auxiliary system returns to its initial state at the end of
the process, whereas the weight may have risen or fallen''~\cite{Lieb1998}. It is moreover assumed that two systems $X$, $X'$ can be composed, denoted by $(X,X')$, as well as that a system $X$ can be scaled by a factor $\lambda$, denoted by $\lambda X$. The scaling corresponds to considering a fraction $\lambda$ of the substance in $X$.

Provided $\prec$ obeys some natural axioms, Lieb and Yngvason show that there is an
essentially unique \emph{thermodynamic entropy} $S$ that correctly characterizes all
possible state transformations, which is given by
\begin{subequations}
  \label{eq:entropy-both}
  \begin{alignat}{2}
    S(X) &= \sup &&\left\{ \lambda : ((1-\lambda) X_{\mathrm{0}}, \lambda X_{\mathrm{1}}) \prec X \right\} \label{eq:entropy} \\
    &= \inf &&\left\{ \lambda : X \prec ((1-\lambda) X_{\mathrm{0}}, \lambda
      X_{\mathrm{1}})\right\}\ , \label{eq:entropy2}
  \end{alignat}
\end{subequations}
where $X_0$ and $X_1$ are two reference states whose choice alter $S$ only by an affine
change of scale~\footnote{This refers to a transformation of the form $c_1 \cdot S + c_0$ with $c_1 >0$.}.
Intuitively, if the state $X$ can be reached adiabatically from $X_0$, and $X_1$ can be attained from
$X$, then the entropy $S(X)$ is defined as the optimal $\lambda$ such that
the state $X$ can be created from $X_{\mathrm{0}}$ and $X_{\mathrm{1}}$ combined at a ratio $(1-\lambda):\lambda$ 
by an adiabatic process. Physically,  $S$ corresponds to the usual thermodynamic entropy as defined by Clausius via the heat $\delta Q_\mathrm{rev}$ transferred into a system at a given temperature $T$ in a reversible process,
$dS=\delta Q_\mathrm{rev}/T$. 

Lieb and Yngvason have extended this framework to include certain non-equilibrium states~\cite{Lieb2013}. 
The states of the corresponding extended state space, $\Gamma_{\mathrm{ext}}$, obey
weaker axioms than those of $\Gamma$. For instance, they may not be scalable.  
The entropy $S$ can in general not be uniquely extended to $\Gamma_{\mathrm{ext}}$. 
However, one can bound all monotonic extensions of $S$ to
$\Gamma_{\mathrm{ext}}$ from below and above by two quantities $S_-$ and $S_+$~\cite{Lieb2013}.
These quantities give necessary criteria as well as sufficient criteria for adiabatic transitions between thermodynamic non-equilibrium states. Here we use instead slightly adapted quantities, defined as
\begin{subequations}
  \begin{alignat}{2}
    \tilde{S}_{\mathrm{-}}(X) &= \sup &&\left\{ \lambda : ((1-\lambda) X_{\mathrm{0}},
      \lambda X_{\mathrm{1}}) \prec X \right\}
    \label{eq:S-tilde-minus}\\
    \tilde{S}_{\mathrm{+}}(X) &= \inf &&\left\{ \lambda : X \prec ((1-\lambda)
      X_{\mathrm{0}}, \lambda X_{\mathrm{1}})\right\} \ .
    \label{eq:S-tilde-plus}
  \end{alignat}
\end{subequations}
While they are essentially equivalent to $S_-$ and $S_+$,
they are the more natural quantities within the
context we consider 
(cf.\ also Appendix for a detailed analysis).

\textit{Information-Theoretic Entropy Measures.}---%
Information theory is concerned with data and their processing. In quantum information
theory, which we consider here for generality, data is encoded in quantum systems~\footnote{These
include classical systems as a special case.}, whose states are described by the
density operator formalism.
Throughout this Letter, we restrict our attention to finite-dimensional quantum systems.
Information is quantified using an information-theoretic entropy measure, most commonly the
von Neumann entropy $H(\rho)=-\operatorname{tr} \left( \rho \log \rho \right)$. 
Note that $\log$ denotes the logarithm with respect to base $2$ in this Letter. 
The information-theoretic significance of the von Neumann entropy can be established in various ways, e.g., axiomatically~\cite{%
  Shannon1948BSTJ,%
  Aczel1974AAP_natural,%
  Ochs1975_axiomatic,%
  Csiszar2008Ent_axiomatic,%
  Baumgartner2014FP_characterizing%
}.
Other useful entropy measures are the min- and the max-entropy. The min-entropy is defined as
$
H_{\mathrm{min}}(\rho) = - \log \|\rho\|_{\infty}
$,
where $\|\rho\|_{\infty}$ denotes the maximal eigenvalue of $\rho$. 
Operationally, it describes the amount of randomness that can be extracted deterministically 
from data in state $\rho$~\cite{Tomamichel2010IEEE_Duality,Koenig2009IEEE_OpMeaning}.
The max-entropy is defined as 
$
H_{\mathrm{max}}(\rho) = \log \operatorname{rank}\rho
$,
and quantifies the number of qubits needed to store data in state~$\rho$~\cite{PhdRenner2005_SQKD}. 

\textit{Equilibrium States and Order Relation in the Microscopic Picture.}---%
To apply Lieb and Yngvason's framework to a microscopic description of
systems, as employed in information theory, we need to formally specify the various ingredients (such
as the order relation) that the abstract framework requires.
First, we identify the set of ``equilibrium states'' of an information-bearing quantum system.
These are defined as the class of states represented by flat density operators, that is, operators whose non-zero eigenvalues are all
equal~\cite{Horodecki2013_ThermoMaj}. They stand out due to their scalability and comparability, like the equilibrium states in the thermodynamic framework.
(The term ``equilibrium'' follows Lieb and Yngvason's terminology.)

We define the composition of states as their tensor product. This does not exclude the possibility of correlations being established between subsystems but it merely asserts that before any interaction takes place these systems are independent~\cite{Horodecki2013_ThermoMaj,Aberg2013_worklike}.
For the composition with an ancilla, this can be ensured by choosing an ancilla that has never interacted with the system before or that has been decoupled through other interactions.

The notion of an ``adiabatic process'', in the
sense of Lieb and Yngvason,
translates to any combination of the following three quantum operations; the order relation characterises which states can be transformed to which others via such a process: 
\begin{itemize}
  \item composition with an extra ancilla system in an equilibrium state; 
  \item  reversible and energy-preserving
interaction of the system and the
ancilla with a weight system~\cite{Aberg2014PRL_catalytic};
  \item removal of the ancilla system, whose final reduced state must be the same as its initial state.
\end{itemize}
Note that these
operations are independent of the
particular model used to describe the
weight system. In fact, the weight
system should be understood as a
representative for any work storage
system, which may also be modelled
via a potential, as e.g.\ in~\cite{Allahverdyan2004}.
It can also
be shown that the above operations are
equivalent to the so-called \emph{noisy
  operations}~\cite{Horodecki2003, Gour2013arXiv_resource}, a quantum resource theory that plays a prominent role in information theory. In general, a resource theory is defined by restricting the set of all quantum operations to a subset, the \emph{allowed transformations}.
Given such a set of allowed
transformations, the aim of a
resource-theoretic analysis is now to
characterise which states can be
interconverted and which tasks can be achieved with these transformations. In the case of noisy operations, a state transformation from $\rho$ to
$\sigma$ 
is possible if and only if
$\rho\prec_M\sigma$~\cite{Horodecki2003, Gour2013arXiv_resource}, where $\prec_M$ denotes the matrix majorization
relation~\cite{BookBhatiaMatrixAnalysis1997,Karamata}.
(See also Appendix~\ref{sec:resources} for further details on quantum resource theories.)

We define scaling a quantum system by $\lambda \in \mathbbm{N}$ 
to mean combining $\lambda$ such systems, which coincides with the composition operation. For $\lambda=2$, for instance, the state $\rho$ is scaled to a state $\rho\otimes\rho$.~\footnote{Note that correlations within the system do not affect this scaling operation. The latter corresponds to a scaling from an outside perspective: without splitting the system up and probing it, the knowledge about its internal properties, such as the entanglement of individual particles, is not accessible. Our operations thus avoid the potential issues with entanglement in microscopic systems that were raised in~\cite{Lieb2014}.}.
We then extend this scaling operation to any $\lambda \in \mathbbm{R}_{\geq 0}$, as is explained in Appendix~\ref{sec:results}.

\textit{Main Results.}---%
We apply Lieb and Yngvason's framework to the microscopic model
detailed above.
We show that Lieb and Yngvason's axioms are fulfilled in these settings,
and hence the implications of the framework apply. This allows us to derive corresponding entropy measures, which are furthermore unique for equilibrium states. 
Specifically, we obtain the following statement, the formal derivation of which is based on properties of the matrix majorization relation. The full proofs may be found in Appendix~\ref{sec:results}.
\begin{prop} \label{prop:entropies}%
Let states be ordered by the relation $\prec_{\mathrm{M}}$ as described above, and let the equilibrium states be those with flat spectrum.
Then the unique thermodynamic entropy function $S$ for equilibrium states coincides with 
the von Neumann entropy $H$.
Furthermore, the two entropic quantities relevant for non-equilibrium states, $\tilde{S}_{\mathrm{-}}$ and $\tilde{S}_{\mathrm{+}}$, equal $H_{\mathrm{min}}$ and $H_{\mathrm{max}}$ respectively.
\end{prop}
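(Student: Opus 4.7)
The plan is to pin down $\tilde S_{\pm}$ and $S$ by evaluating $\prec_M$ against a canonical family of reference states. I would fix $X_0 = |0\rangle\langle 0|$ (pure) and $X_1 = I_2/2$ (maximally mixed qubit); both are flat, and $(1-\lambda)X_0$ stays pure while $\lambda X_1$ is, for integer $\lambda$, maximally mixed on $2^\lambda$ dimensions, so the composite $((1-\lambda)X_0, \lambda X_1)$ is a flat state of rank $2^\lambda$. Since $H(X_0)=0$ and $H(X_1)=1$, this choice also fixes the affine-scale freedom of $S$ to coincide with the standard normalization of the von Neumann entropy, so no post-hoc rescaling is required.

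The heart of the argument is a direct characterization of matrix majorization against a flat rank-$2^\lambda$ state. With the paper's convention ($A \prec_M B$ means ``$B$ can be obtained from $A$ by noisy operations'', so $B$ is at least as mixed as $A$), the condition $((1-\lambda)X_0, \lambda X_1) \prec_M \rho$ translates into the classical majorization inequality of the uniform distribution on $2^\lambda$ points by the spectrum of $\rho$. All the requisite inequalities reduce to the $k=1$ case, $\|\rho\|_\infty \leq 2^{-\lambda}$, which is exactly $\lambda \leq H_{\min}(\rho)$. Dually, $\rho \prec_M ((1-\lambda)X_0, \lambda X_1)$ reduces to $\operatorname{rank}(\rho) \leq 2^\lambda$, i.e.\ $\lambda \geq H_{\max}(\rho)$, because the top-$k$ mass of any rank-$r$ distribution is automatically at least $k/r$ and this is enough to secure the remaining majorization inequalities. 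Substituting into~(\ref{eq:S-tilde-minus})--(\ref{eq:S-tilde-plus}) then gives $\tilde S_-(\rho)=H_{\min}(\rho)$ and $\tilde S_+(\rho)=H_{\max}(\rho)$; for flat $\rho$ these coincide with $H(\rho)=\log\operatorname{rank}(\rho)$, and the Lieb--Yngvason uniqueness theorem for equilibrium entropy forces $S(\rho)=H(\rho)$.

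The step I expect to require the most care is extending the argument from integer $\lambda$ to all $\lambda \in \mathbb{R}_{\geq 0}$, which is necessary because the defining sup/inf are taken over a real range. This means invoking the real-scaling extension described in the Appendix and verifying that, under it, the effective rank and maximum eigenvalue of $((1-\lambda)X_0, \lambda X_1)$ behave as $2^\lambda$ and $2^{-\lambda}$ continuously in $\lambda$, so that the sup/inf pick out $H_{\min}$ and $H_{\max}$ exactly rather than merely bracketing them. A secondary, routine check is closure of the flat states under composition and scaling, ensuring that the Lieb--Yngvason axioms (and hence the uniqueness statement for $S$) genuinely apply to the equilibrium subclass identified here.
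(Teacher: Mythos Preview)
Your proposal is correct and follows essentially the same route as the paper's proof: both reduce $((1-\lambda)X_0,\lambda X_1)\prec_M\rho$ to the single condition $\|\rho\|_\infty\le 2^{-\lambda}$ (via the $k=1$ partial sum dominating all others), and reduce $\rho\prec_M((1-\lambda)X_0,\lambda X_1)$ to $\operatorname{rank}\rho\le 2^\lambda$ (via the concavity/normalization of the Lorenz curve), then read off $H_{\min}$, $H_{\max}$, and $H$. The only cosmetic difference is that the paper carries general flat reference states $\rho_0,\rho_1$ throughout and obtains the result up to the affine constants $C_1=1/\log(\mathrm{r}(\rho_1)/\mathrm{r}(\rho_0))$, $C_0=-C_1\log\mathrm{r}(\rho_0)$, specializing to $\mathrm{r}(\rho_0)=1$, $\mathrm{r}(\rho_1)=2$ only at the end, whereas you fix this gauge from the outset; and the paper handles the real-$\lambda$ extension via the step-function formalism $f_{\lambda\rho}(x)=f_\rho(x^{1/\lambda})^\lambda$, which is exactly the mechanism you anticipate needing.
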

In other words, quantum information theory can be viewed as an instance of
  thermodynamics in the sense of Lieb and Yngvason, and the corresponding thermodynamic
  entropy is precisely the information-theoretic entropy.

It is natural to consider other, non-adiabatic processes within the same mathematical framework, for instance scenarios where the system is in contact with additional reservoirs.
In case the system interacts with a heat bath, the equilibrium states can be taken to be the thermal states of fixed temperature $T$, corresponding to the temperature of the bath, as also known from the canonical ensemble in statistical mechanics.

In such a setting, the processes on the system of interest are the \emph{thermal operations}, introduced in~\cite{Janzing2000_cost,Brandao2013_resource,Horodecki2013_ThermoMaj}. These consist of
\begin{itemize}
	\item composition with an ancillary system in a thermal state relative to the heat bath at temperature $T$; 
	\item reversible and energy-conserving transformation of the system and the ancilla; 
	\item removal of any subsystem.
\end{itemize}
These operations have been extensively studied and used to understand and characterize
possible thermodynamic operations in information-theoretic terms (see for instance~\cite{Brandao2013_resource,Horodecki2013_ThermoMaj,Brandao2015}).  As in~\cite{Horodecki2013_ThermoMaj,Brandao2015}, we restrict our analysis to states of the
system that are block diagonal in the energy eigenbasis.  This restriction allows us to
avoid technical difficulties when dealing with coherent superpositions of energy levels;
other works have also studied general transformations beyond this
restriction~\cite{Korzekwa2015,Lostaglio2015arXiv_noncommutativity,Guryanova2016NatComm_multiple}.

The thermal operations are characterized with the mathematical notion of
thermo-majorization~\cite{Horodecki2013_ThermoMaj}, denoted here by $\prec_\mathrm{T}$. 
This order relation obeys Lieb and Yngvason's axioms, and, as before, we may deduce
corresponding thermodynamic entropy measures.

\begin{prop}\label{prop:reservoir} For states ordered by means of thermal operations through the relation $\prec_{\mathrm{T}}$ the unique function 
$S$ for thermal states coincides with 
the Helmholtz free energy $F$, and the two quantities $\tilde{S}_{\mathrm{-}}$ and $\tilde{S}_{\mathrm{+}}$, relevant for non-thermal states, correspond to 
$F_{\mathrm{max}}$ and $F_{\mathrm{min}}$ from~\cite{Horodecki2013_ThermoMaj,Brandao2015} respectively.
\end{prop}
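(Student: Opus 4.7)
\emph{Plan for the proof.} The strategy parallels Proposition~\ref{prop:entropies}, now exploiting the well-known correspondence between thermo-majorization and matrix majorization via a ``Gibbs rescaling''. First, I would verify that the order relation $\prec_{\mathrm{T}}$, together with tensor-product composition and the natural scaling operation, satisfies the Lieb--Yngvason axioms on the space of block-diagonal states---both the stronger equilibrium axioms restricted to thermal states of temperature $T$ and the weaker non-equilibrium axioms for general states. These follow from the standard properties of thermo-majorization and its Lorenz-curve characterization established in~\cite{Horodecki2013_ThermoMaj,Brandao2015}: reflexivity, transitivity, and additivity under composition come from the multiplicativity of Lorenz curves under tensor products of Gibbs states, while stability and cancellation lift from the corresponding properties of matrix majorization through the Gibbs embedding.

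For the equilibrium case, I would evaluate the sup/inf definitions~\eqref{eq:entropy-both} on thermal states. Two thermal states at temperature $T$ (possibly with different Hamiltonians) interconvert under thermal operations according to their Helmholtz free energies: $\gamma_A \prec_{\mathrm{T}} \gamma_B$ iff $F(\gamma_A) \geq F(\gamma_B)$, reflecting the fact that $-F$ is monotone along $\prec_{\mathrm{T}}$. Since $F$ is additive under tensor products, the composite $((1-\lambda) X_0, \lambda X_1)$ has free energy $(1-\lambda) F(X_0) + \lambda F(X_1)$, and solving the resulting linear comparability condition for the optimal $\lambda$ yields $S(X)$ as an affine function of $F(X)$. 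Lieb and Yngvason's uniqueness-up-to-affine-scale result then identifies $S$ with the Helmholtz free energy $F$ in the standard thermodynamic convention.

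For the non-equilibrium part, I would reduce directly to Proposition~\ref{prop:entropies} by invoking the Gibbs embedding of~\cite{Horodecki2013_ThermoMaj}: a block-diagonal state $\rho$ with thermal reference $\gamma$ is mapped into a larger system on which $\gamma$ becomes maximally mixed (``flat'') and thermo-majorization reduces to ordinary matrix majorization. Under this embedding, thermal states of the original system correspond to equilibrium (flat) states in the sense of Proposition~\ref{prop:entropies}, and the quantities $\tilde{S}_{-}$ and $\tilde{S}_{+}$ transport to the min- and max-entropies of the embedded state. Undoing the embedding, these become precisely the single-shot free energies defined via the max- and min-relative entropies of $\rho$ with respect to $\gamma$, i.e.\ $F_{\max}$ and $F_{\min}$ of~\cite{Horodecki2013_ThermoMaj,Brandao2015}; the apparent swap of ``min'' and ``max'' labels between the entropic and free-energy quantities reflects the convention that $F_{\max}$ is defined through $D_{\max}$.

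The main technical obstacle is handling non-integer scaling in this embedded picture: the Gibbs embedding involves dimensions proportional to the (generically irrational) Gibbs weights $e^{-\beta E_i}/Z$, so extending scaling from $\lambda \in \mathbbm{N}$ to $\lambda \in \mathbbm{R}_{\geq 0}$---already delicate in Proposition~\ref{prop:entropies}---requires rational approximation of the Gibbs weights together with a continuity/limit argument resting on the stability axiom. A secondary subtlety is that in the presence of a non-trivial Hamiltonian, naive tensor scaling does not leave the equilibrium subspace invariant in the same way it does in the Hamiltonian-free setting of Proposition~\ref{prop:entropies}, so the non-equilibrium axioms must be re-verified with explicit attention to how scaling interacts with the energy eigenstructure.
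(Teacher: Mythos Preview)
Your overall strategy is correct and aligns with the paper's, but the implementations differ in a way worth noting. The paper does not perform a literal Gibbs embedding into a larger discrete Hilbert space and then invoke Proposition~\ref{prop:entropies}. Instead, it works directly with the \emph{Gibbs-rescaled step functions} $f^{\mathrm{T}}_\rho$ of Definition~\ref{definition:Gibbs}, which play exactly the role that the ordinary step functions $f_\rho$ played in the proof of Proposition~\ref{prop:entropies}, with the partition function $Z_\tau$ replacing the rank and the maximal rescaled eigenvalue $p^{\mathrm{res}}_{\max}$ replacing $p_1$. Both the verification of the axioms (Proposition~\ref{prop:thermaxioms}) and the evaluation of $\tilde S_{\mathrm{T}\pm}$ then proceed line by line as in Lemma~\ref{lemma:major} and the proof of Proposition~\ref{prop:entropies}, after this substitution. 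Because the step-function formalism is already continuous in its argument, the irrationality of the Gibbs weights never enters, and the ``main technical obstacle'' you identify---rational approximation of $e^{-\beta E_i}/Z$ followed by a continuity argument---simply does not arise. Your route via a discrete embedding and Proposition~\ref{prop:entropies} would work, but it introduces an avoidable layer of approximation that the paper's direct computation bypasses.

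Your second concern, that tensor scaling might fail to preserve the equilibrium class when the Hamiltonian is nontrivial, is also not an issue in the paper's setup: $\tau^{\otimes n}$ is the thermal state of the $n$-copy Hamiltonian at the same temperature, with partition function $Z_\tau^{\,n}$, so integer scaling stays inside the equilibrium class, and non-integer scaling is defined at the level of the (flat) rescaled step functions exactly as in Section~\ref{sec:results}. One small clarification: the paper's ``equilibrium states'' here are not thermal states for varying Hamiltonians but rather normalised projections of the thermal state~\eqref{eq:thermal} onto energy subspaces, ordered by the partition function $Z_\tau$ on the support; your free-energy argument goes through unchanged for this class.
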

The single-shot measures $F_\mathrm{min}$ and
$F_\mathrm{max}$ were introduced in~\cite{Horodecki2013_ThermoMaj,Brandao2015} to
describe the work needed for the formation of a state as well as to characterise the extractable work.
Proposition~2 follows analogously to Proposition~1, but with the thermo-majorization relation instead of the (standard) majorization; the former includes
an additional transformation known as Gibbs-rescaling~\cite{Ruch1975_diagramlattice,%
Ruch1976_increasingmixing,%
Mead1977JCP_mixing,%
Egloff2012,%
Horodecki2013_ThermoMaj} (cf.\ Appendix~\ref{sec:results2}).

Scenarios defined relative to other types of reservoirs, such as a
particle~\cite{Halpern2014} or an angular momentum reservoir~\cite{Barnett2013_beyond,Vaccaro2011}, yield analogous results. Various settings,
along with their corresponding order relation, equilibrium states, and resulting state
functions, are summarized in Table~\ref{table:questions}.
\begin{table}\scriptsize
\begingroup
\def\inlinedisplayeqn#1{\vspace*{0.5ex}$\displaystyle #1$\vspace*{0.5ex}}
\begin{tabular}{|m{1.4cm}|m{1.5cm}|m{2.7cm}|m{1.2cm}|m{1.1cm}|}
    \hline
    \hspace{0pt}\textbf{Setting}  &\hspace{0pt}\textbf{Processes} &\hspace{0pt}\textbf{Equilibrium States} &\hspace{0pt}\textbf{$S$} & \hspace{0pt}$\tilde{S}_{\mathrm{-}}$, $\tilde{S}_{\mathrm{+}}$\\ 
		\hline
    \hspace{0pt}Isolated system & \hspace{0pt}Noisy operations, $\prec_{\mathrm{M}}$ &
 \inlinedisplayeqn{\sum_{i}\frac{1}{\operatorname{rank}\rho}\left| X_{i}  \right\rangle \left\langle X_{i} \right|} & \hspace{0pt}Entropy $H$ & \hspace{0pt}$H_{\mathrm{\mathrm{min}}}$, $H_{\mathrm{max}}$\\ 
		\hline
    \hspace{0pt}\raggedright Interaction with a heat bath & \hspace{0pt}Thermal operations, $\prec_{\mathrm{T}}$ & \inlinedisplayeqn{\sum_{i} \frac{ e^{-\beta E_{i}}}{Z} \left| E_{i}  \right\rangle \left\langle E_{i} \right|} & \hspace{0pt}Free Energy $F$ & \hspace{0pt}$F_{\mathrm{min}}$, $F_{\mathrm{max}}$\\ 
		\hline
    \hspace{0pt}\raggedright Interaction with a heat bath and a particle reservoir &
    \hspace{0pt}N,T-operations, $\prec_{\mathrm{N,T}}$ & 
    \inlinedisplayeqn{\sum_{i}\frac{e^{-\beta(E_{i}-N_{i}\mu)}}{\mathcal{Z}}\newline
      \hspace*{1em}\times\left| E_{i},N_{i} \right\rangle\left\langle E_{i},N_{i} \right|}
    & \hspace{0pt}Grand potential $\Omega$ & \hspace{0pt}$\Omega_{\mathrm{min}}$, $\Omega_{\mathrm{max}}$\\ 
	  \hline
    \hspace{0pt}\raggedright Interaction with an angular momentum reservoir & \hspace{0pt}J-operations, $\prec_{\mathrm{J}}$ & \inlinedisplayeqn{\sum_{i}\frac{e^{ - \hbar \gamma J_{i}}}{\mathcal{Z_{\mathrm{J}}}}\left| J_{i} \right\rangle \left\langle J_{i} \right|}  & \hspace{0pt}Potential $S_{\mathrm{J}}$ & \hspace{0pt}$\tilde{S}_{\mathrm{J-}}$, $\tilde{S}_{\mathrm{J+}}$\\
          \hline
    \end{tabular}
\endgroup
\caption{An overview on the application of the Lieb-Yngvason framework to various scenarios, as specified in the first column of the table.
The first two rows summarize the results of Propositions~1 and 2 (cf.\ Appendix~\ref{sec:results3} for details on the other two scenarios).}
\label{table:questions}
\end{table}

\textit{Discussion.}---%
\begin{figure}%
\centering
\includegraphics[width=\columnwidth]{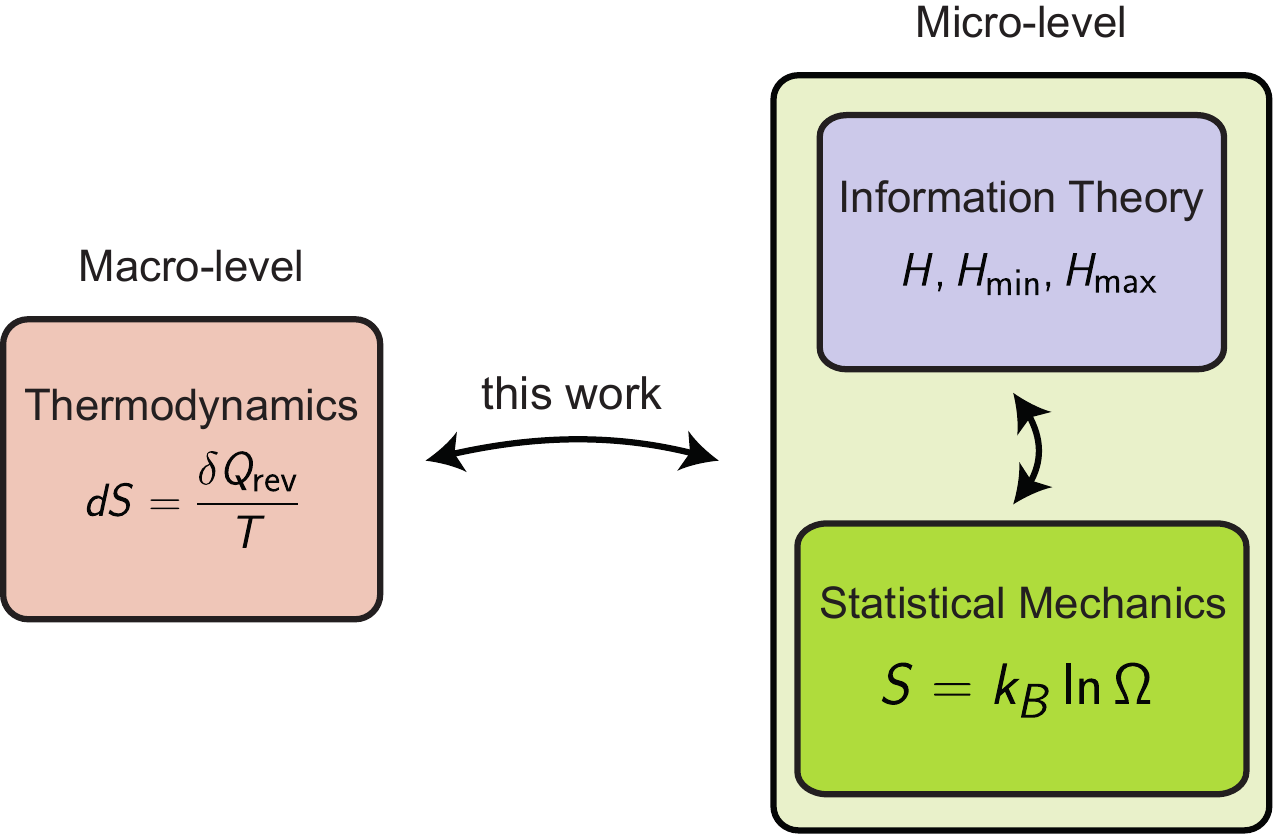}
\caption{The contribution of this Letter is to
draw a connection between thermodynamic and information-theoretic entropy based on formal axioms and general principles such as additivity and monotonicity that are satisfied within both theories (Proposition~\ref{prop:entropies}).
This replaces the textbook identification of entropy in thermodynamics and statistical mechanics based merely on analogies, such as common behaviour of the two entropy functions in specific models like the ideal gas. 
The entropy functions in information-theory and statistical mechanics are already formally equivalent,  and conceptual links have been established through Jaynes' maximum entropy principle~\protect\cite{Jaynes1957} as well as through the works of 
Landauer~\protect\cite{Landauer1961_5392446Erasure} and Bennett~\protect\cite{Bennett1982IJTP_ThermodynOfComp} (based on explicit protocols that convert one into the other in an underlying microscopic model).
}
\label{fig:entropymap}
\end{figure}%
We have shown that, with minor adaptations, Lieb and Yngvason's framework is directly 
applicable to quantum resource theories, allowing us to put thermodynamic and information-theoretic entropy on the same footing (see also Figure~\ref{fig:entropymap}). 

More generally, our approach points out the formal and conceptual parallels of phenomenological thermodynamics and quantum resource theories.
It thus relates the classical thermodynamic description of a system (based on macroscopic properties) to a description in terms of the information (about the microscopic degrees of freedom) held by an observer.  
This underscores that the description of a physical system, in particular its characterisation with an entropy function, is observer-dependent, hence, subjective.

We have justified the use of the majorization relation based on Lieb and Yngvason's adiabatic
operations. This relation also occurs naturally in information theory, however. Indeed, it
expresses for instance an encoding operation or the inverse of a randomness extraction
process~\footnote{By ``inverse'' we mean the map that inverts the extractor
  function: It is a stochastic process and thus describes a noisy operation.}.
We have also shown that order relations characterising other resource theories are compatible with the axiomatic framework and allow us to derive corresponding ``entropy functions'' (cf.\ Proposition~2 and Appendix). We have in this way recovered the expression of the Helmholtz free energy as well as corresponding single-shot counterparts~\cite{Horodecki2013_ThermoMaj,Brandao2015}.

For a system interacting with a reservoir, our approach is so far limited to
states that are block diagonal in a corresponding eigenbasis, e.g.\@ in the particular
case of a heat bath in the energy eigenbasis~\cite{Horodecki2013_ThermoMaj}. We leave for future work the question whether our results also hold for states that do not satisfy this property,
which have been studied in~\cite{Lostaglio2015,Korzekwa2015,Lostaglio2015arXiv_noncommutativity,Guryanova2016NatComm_multiple,Winter2015,Baumgratz2014}.

We expect that the approach presented in this work can be applied to relate other thermodynamic and information-theoretic quantities. 
For example, by slightly changing the order
relation to a ``smooth majorization relation'', we presume it to yield so called
smooth entropy measures~\cite{PhdRenner2005_SQKD,Renner2005}.  
Furthermore, we have not treated processes where quantum side information about the system is exploited. 
This could be useful for performing thermodynamic
operations~\cite{DelRio2011}.
We might anticipate that an appropriate extension of the Lieb-Yngvason framework would provide an axiomatic and operationally well-justified definition of the conditional entropy.

\bigskip

\textit{Acknowledgements.}---%
This project has been
supported by the Swiss National Science Foundation (SNSF) via project grant No.~200021\_153296 and
via the NCCR ``QSIT'',
by the European Research Council (ERC) via project
No.~258932, and by the COST Action MP1209.

\bigskip

\appendix 

\section{Lieb and Yngvason's axiomatic approach} \label{sec:ly}
Lieb and Yngvason have devised an axiomatic approach to derive an entropy function for 
thermodynamic equilibrium states~\cite{Lieb1998,Lieb1999_secondlaw}.
Recently, they have extended their approach to a special class of non-equilibrium states~\cite{Lieb2013_entropy_noneq}, 
which also enables them to make predictions relevant for non-equilibrium thermodynamics. We present a short summary of their approach here, focusing on the details relevant for our application. For further information we refer to the original framework~\cite{Lieb1998,Lieb1999_secondlaw,Lieb2013_entropy_noneq,Lieb2014}.

Lieb and Yngvason consider a preorder $\prec$ on a set $\Gamma$;
physically $\Gamma$ is the space of all equilibrium states of a thermodynamic system. The ordering of the states expresses the existence of adiabatic processes that convert one state into another: for two states $X,~Y \in \Gamma$ the ordering $X \prec Y$ means that there exists an adiabatic process transforming the system in state $X$ to the state $Y$. Note that an adiabatic process is explicitly defined as a process that leaves no trace on the environment except that a weight may have changed its relative position. Such transformations are also known as ``work processes''~\cite{Gyftopoulos2005}.
Mathematically, a preorder is reflexive and transitive, but in contrast to a partial order not antisymmetric;
two elements $X$ and $Y \in \Gamma$ that satisfy $X \prec Y$ as well as 
$Y \prec X$ are not necessarily the same element of the set $\Gamma$.
Whenever both relations $X \prec Y$ and $Y \prec X$ hold, this is denoted by $Y \sim X$,
while $X \prec Y$ but not $Y \prec X$ is denoted $X \prec \prec Y$.

Elements $X_{\mathrm{1}} \in \Gamma_{\mathrm{1}}$ and $X_{\mathrm{2}} \in \Gamma_{\mathrm{2}}$
of possibly different $\Gamma_{\mathrm{1}}$ and $\Gamma_{\mathrm{2}}$ can be composed, 
physically meaning that they can be considered as one composite system before any interaction takes place. 
The composed state is formally denoted as 
$(X_{\mathrm{1}},~X_{\mathrm{2}}) \in \Gamma_{\mathrm{1}} \times \Gamma_{\mathrm{2}}$.
Note that the Cartesian product $\Gamma_{\mathrm{1}} \times \Gamma_{\mathrm{2}}$ 
stands for the space of all states
$(X_{\mathrm{1}},~X_{\mathrm{2}})$ of the composed system, where the composition operation is associative and commutative.
Furthermore, any element $X \in \Gamma$ can be scaled, i.e., for any $\lambda \in \mathbbm{R}_{\geq 0}$ one can define a scaled element denoted as $\lambda X \in \lambda \Gamma$.\footnote{ 
The scaling is required to obey $1 X = X$ as well as $\lambda_{\mathrm{1}}(\lambda_{\mathrm{2}} X) = (\lambda_{\mathrm{1}}\lambda_{\mathrm{2}}) X$.
For the sets $\Gamma$, the required properties are $1 \Gamma = \Gamma$ and $\lambda_{\mathrm{1}}(\lambda_{\mathrm{2}} \Gamma) = (\lambda_{\mathrm{1}}\lambda_{\mathrm{2}}) \Gamma$,
where $\lambda \Gamma$ denotes the space of scaled elements $\lambda X$.}
Scaling a system by a factor $\lambda$ means taking $\lambda$ times the amount of substance contained in the original system.

The order relation $\prec$ satisfies by assumption the following six axioms E1 to E6 as well as the Comparison Hypothesis.
\begin{itemize}
\item Reflexivity (E1): $X \sim X$.
\item Transitivity (E2): $X \prec Y$ and $Y \prec Z$ $\Rightarrow$ $X \prec Z$.
\item Consistent composition (E3): $X \prec Y$ and $X' \prec Y'$ $\Rightarrow$ $(X,~X') \prec (Y,~Y')$.
\item Scaling invariance (E4): $X \prec Y$ $\Rightarrow$ $\lambda X \prec \lambda Y$, $\forall \ \lambda > 0$. 
\item Splitting and recombination (E5): For $0<\lambda<1$, $X \sim (\lambda X,~(1-\lambda) X)$.
\item Stability (E6): If $(X,~\varepsilon Z_{\mathrm{0}}) \prec (Y,~\varepsilon Z_{\mathrm{1}})$ for a sequence of scaling factors $\varepsilon \in \mathbbm{R}$ tending to zero, 
then $X \prec Y$.
\item Comparison Hypothesis: Any two elements in a set $(1-\lambda)\Gamma \times \lambda \Gamma$ with $0 \leq \lambda \leq 1$ are related by $\prec$~\footnote{Note that Lieb and Yngvason do not adopt the Comparison Hypothesis as an axiom but rather derive it from additional axioms about thermodynamic systems~\cite{Lieb1998,Lieb1999_secondlaw}.}.
\end{itemize}

The above axioms also directly imply the so called cancellation law~\cite{Lieb1998,Lieb1999_secondlaw}: Let $X$, $Y$, and $Z$ be any states, then
\begin{equation}\label{eq:cancellation}
\left(X,~Z \right) \prec \left(Y,~Z \right) \ \Rightarrow  \ X \prec Y.
\end{equation}
Systems in equilibrium can thus not be used to enable state transformations between the states $X$ and $Y$.

Lieb and Yngvason's contribution concerns possible ``entropy functions''. More precisely, they show that there is a (essentially) unique 
real valued function $S$ on the space of all equilibrium states of thermodynamic systems that satisfies the following:
\begin{itemize}
\item Additivity: For any two states $X \in \Gamma$ and $X' \in \Gamma'$, $S((X,X'))=S(X)+S(X')$.
\item Extensivity: For any $\lambda > 0$ and any $X \in \Gamma$, $S(\lambda X)=\lambda S(X)$.
\item Monotonicity: For two states $X$ and $Y$ $\in \Gamma$, $X \prec Y \Leftrightarrow S(X) \leq S(Y)$.
\end{itemize}
Lieb and Yngvason's second law is rephrased in the following theorem.
\begin{thm}[Lieb \& Yngvason]
  \label{thm:entropythm}
  Provided that the six axioms E1 to E6 as well as the Comparison Hypothesis are fulfilled, 
  there exists a function $S$ that is additive under composition, extensive in the scaling and monotonic with respect to $\prec$. 
  Furthermore, this function $S$ is unique up to an affine change of scale $C_1 \cdot S + C_0$ with $C_1 > 0$. 
\end{thm}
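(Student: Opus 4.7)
The plan is to construct $S$ explicitly via the two expressions in Eq.~(\ref{eq:entropy-both}), verify that they coincide and give a function with the three required properties (monotonicity, additivity, extensivity), and then argue uniqueness up to an affine change of scale. Fix two reference states $X_0, X_1 \in \Gamma$ with $X_0 \prec\prec X_1$ (if no such pair exists the theorem is vacuous), and define $S_-(X) := \sup\{\lambda : ((1-\lambda)X_0, \lambda X_1) \prec X\}$ and $S_+(X) := \inf\{\lambda : X \prec ((1-\lambda)X_0, \lambda X_1)\}$, where for $\lambda$ outside $[0,1]$ the composite is given meaning through E5 by transferring mass between the two sides of $\prec$. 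A preliminary lemma, based on E3, E4, and $X_0 \prec\prec X_1$, shows that the one-parameter family $\{((1-\lambda)X_0, \lambda X_1)\}_\lambda$ is totally $\prec$-ordered with $\lambda$ increasing, and the Comparison Hypothesis then ensures every $X \in \Gamma$ is comparable to each member of the family.

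The main technical step, and the principal obstacle in my view, is to prove $S_-(X) = S_+(X)$. The two quantities define a Dedekind cut on the admissible $\lambda$'s, and the stability axiom E6 is exactly what forces the cut to be sharp: any positive-width gap $[S_-(X), S_+(X)]$ would, after scaling the reference states by $\varepsilon \to 0$ and invoking E6 together with the cancellation law~(\ref{eq:cancellation}), produce a contradiction. Once $S(X) := S_-(X) = S_+(X)$ is well-defined, monotonicity is immediate from E2: $X \prec Y$ implies that the sup-defining set for $X$ sits inside that for $Y$, giving $S(X) \le S(Y)$; the converse direction reads off $X \prec Y$ from $S(X) \le S(Y)$ using the Comparison Hypothesis applied to the scaled space $(1-\lambda)\Gamma \times \lambda\Gamma$.

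Extensivity $S(\mu X) = \mu S(X)$ follows by applying E4 to the defining relation and rewriting the scaled composite $(\mu(1-\lambda)X_0, \mu\lambda X_1)$ via E5. Additivity $S((X, X')) = S(X) + S(X')$ is obtained by approximating $X$ and $X'$ through reference composites at parameters $\lambda \approx S(X)$ and $\lambda' \approx S(X')$, composing via E3, and reorganising the result through E4 and E5 into a reference composite of double size at parameter $(\lambda+\lambda')/2$, with E6 absorbing the approximation error and already-established extensivity producing the factor of two. For uniqueness, let $S'$ be any other function with the three properties. Set $C_0 := S'(X_0)$ and $C_1 := S'(X_1) - S'(X_0)$, which is strictly positive because $X_0 \prec\prec X_1$ together with monotonicity forces $S'(X_0) < S'(X_1)$. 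Additivity and extensivity of $S'$ applied to $((1-\lambda)X_0, \lambda X_1)$ give $S'((1-\lambda)X_0, \lambda X_1) = C_0 + \lambda C_1$, and monotonicity then sandwiches $S'(X)$ between $C_0 + C_1 S_-(X)$ and $C_0 + C_1 S_+(X)$; equality of these bounds forces $S' = C_1 S + C_0$ throughout $\Gamma$.
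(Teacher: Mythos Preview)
The paper does not supply its own proof of this theorem: it is stated as a result of Lieb and Yngvason, with the proof deferred to their original articles (the citations immediately preceding the theorem). There is therefore nothing in the paper to compare against, and your sketch is, in outline, a faithful recapitulation of the original Lieb--Yngvason argument: construct the candidate $S$ from the two optimisations over the reference family, show they agree, then verify the three structural properties and uniqueness.

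Your sketch is correct in its architecture, but one point deserves sharpening. You attribute the equality $S_-(X)=S_+(X)$ primarily to the stability axiom E6. In fact the two inequalities come from different places. The inequality $S_-(X)\le S_+(X)$ follows from transitivity together with the \emph{strict} monotonicity of the reference family in $\lambda$; establishing that strictness is where the cancellation law~\eqref{eq:cancellation}, and hence E6, genuinely enters. The reverse inequality $S_+(X)\le S_-(X)$ is driven by the Comparison Hypothesis: for any $\lambda$ strictly between $S_-(X)$ and $S_+(X)$, the state $X$ (which by E5 is equivalent to an element of $(1-\lambda)\Gamma\times\lambda\Gamma$) would be incomparable to $((1-\lambda)X_0,\lambda X_1)$, contradicting CH. E6 is also what guarantees that the supremum and infimum are actually attained, so that $X\sim((1-S(X))X_0,S(X)X_1)$; this equivalence is what you need for the converse direction of monotonicity and for the clean additivity argument, and it is worth stating explicitly rather than leaving it implicit in ``E6 absorbing the approximation error''.
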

For a state $X \in \Gamma$, the unique function $S$ is given as
\begin{alignat}{2} S(X) &= \sup &&\left\{ \lambda : ((1-\lambda) X_{\mathrm{0}}, \lambda X_{\mathrm{1}}) \prec X \right\} \label{eq:entropyoriginal1}\\
&= \inf  &&\left\{ \lambda : X \prec ((1-\lambda) X_{\mathrm{0}}, \lambda X_{\mathrm{1}}) \right\} \label{eq:entropyoriginal2}
, \end{alignat}
where the elements $X_{\mathrm{0}} \prec \prec X_{\mathrm{1}} \in \Gamma$ may be chosen freely, and
change the function $S(X)$ only by an affine transformation as stated in the theorem.
In this sense, the states $X_\mathrm{0},X_\mathrm 1$ define a gauge.

The expressions~\eqref{eq:entropyoriginal1} and~\eqref{eq:entropyoriginal2} assume that
$X_\mathrm 0 \prec X \prec X_\mathrm 1$, and yield a value $S(X)$ which is between zero
and one.  However it is straightforward to extend this definition to states $X$ with
$X\prec X_\mathrm 0$ or $X_\mathrm1 \prec X$.  Indeed, following the idea in
Ref.~\cite{Lieb1999_secondlaw}, the expression
\begin{align}
  ((1-\lambda) X_\mathrm 0, \lambda X_\mathrm 1) \prec X
  \label{eq:expression-in-entropy-X0X1precX}
\end{align}
is equivalent for any $\lambda'\geqslant 0$ to
$(((1-\lambda) X_\mathrm 0, \lambda X_\mathrm 1), \lambda' X_\mathrm 1) \prec (X, \lambda'
X_1)$
and hence
$((1-\lambda) X_\mathrm 0, (\lambda'+\lambda) X_\mathrm 1) \prec (X, \lambda' X_1)$.  This
allows us to consider negative $\lambda$ (while still using only positive coefficients as
scaling factors). If $\lambda<0$, by choosing $\lambda'=-\lambda$, the
condition~\eqref{eq:expression-in-entropy-X0X1precX} is to be understood as
$((1-\lambda) X_\mathrm 0) \prec (X, |\lambda| X_1)$.  If $\lambda>1$, following a similar
argument, the condition $((1-\lambda) X_\mathrm 0, \lambda X_\mathrm 1) \prec X$ should be
understood as $\lambda X_\mathrm 1 \prec ((\lambda-1) X_\mathrm 0, X)$.  The same
reasoning holds for the expression in~\eqref{eq:entropyoriginal2}.

Since the scaling is continuous, the supremum in~\eqref{eq:entropyoriginal1} and the infimum in~\eqref{eq:entropyoriginal2} are attained, and the function $S$ can be conveniently expressed as 
\begin{equation} S(X)= \left\{ \lambda : ((1-\lambda) X_{\mathrm{0}}, \lambda X_{\mathrm{1}}) \sim X \right\}. \label{eq:entropy-2} \end{equation}

The framework described above for equilibrium states can be extended to include non-equilibrium states~\cite{Lieb2013_entropy_noneq}.
To describe non-equilibrium states of a thermodynamic system, additional elements that do not have to satisfy the scaling property are introduced.
By adding such non-scalable elements to the set $\Gamma$ one obtains an extended set $\Gamma_{\mathrm{ext}}$ 
for which the following is required:
\begin{itemize}
\item N1: For any $X' \in \Gamma_{\mathrm{ext}}$ there exist $X_{\mathrm{0}}$ and $X_{\mathrm{1}} \in \Gamma$ with $X_{\mathrm{0}} \prec X' \prec X_{\mathrm{1}}$.
\item N2: Axioms E1, E2, E3, and E6, where $Z_{\mathrm{0}}$ and $Z_{\mathrm{1}} \in \Gamma$ in axiom E6, hold on $\Gamma_{\mathrm{ext}}$.
\end{itemize}
The first requirement ensures that the non-scalable elements are comparable to at least two elements $X_{\mathrm{0}}$ and $X_{\mathrm{1}}$ 
of the set $\Gamma$. Furthermore, it assures that the non-scalable elements are not at the boundary of the extended 
set $\Gamma_{\mathrm{ext}}$ with respect to the preorder $\prec$. 
For such non-equilibrium states, the following holds~\cite{Lieb2013_entropy_noneq}.
\begin{prop}[Lieb \& Yngvason] \label{prop:nonequ}
On condition that N1 and N2 hold for any non-equilibrium state $X \in \Gamma_{\mathrm{ext}}$, the two functions $S_{\mathrm{-}}$ and $S_{\mathrm{+}}$ defined as
\begin{alignat}{2} S_{\mathrm{-}}(X) &= \sup &&\left\{ S(X'): X' \in  \Gamma, X' \prec X \right\}, \label{eq:lymin} \\
S_{\mathrm{+}}(X) &= \inf &&\left\{ S(X''): X'' \in \Gamma, X \prec X'' \right\} \label{eq:lymax} \end{alignat}
bound all possible extensions $S_{\mathrm{ext}}$ of $S$ to the set $\Gamma_{\mathrm{ext}}$ that are monotonic with respect to $\prec$. 
\end{prop}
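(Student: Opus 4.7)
The plan is to establish the two-sided inequality $S_{\mathrm{-}}(X) \leq S_{\mathrm{ext}}(X) \leq S_{\mathrm{+}}(X)$ for every monotonic extension $S_{\mathrm{ext}}$, by direct application of monotonicity together with the fact that $S_{\mathrm{ext}}$ agrees with $S$ on $\Gamma$. First I would verify that $S_{\mathrm{-}}$ and $S_{\mathrm{+}}$ are well-defined real-valued functions on $\Gamma_{\mathrm{ext}}$. By N1, for every $X \in \Gamma_{\mathrm{ext}}$ there exist $X_{\mathrm{0}},X_{\mathrm{1}} \in \Gamma$ with $X_{\mathrm{0}} \prec X \prec X_{\mathrm{1}}$, so the sets appearing in \eqref{eq:lymin} and \eqref{eq:lymax} are non-empty. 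Whenever $X' \prec X$ and $X \prec X''$ with $X',X'' \in \Gamma$, transitivity (E2, inherited via N2) gives $X' \prec X''$, and monotonicity of $S$ on $\Gamma$ then yields $S(X') \leq S(X'')$. Hence the supremum in \eqref{eq:lymin} is bounded above by any such $S(X'')$, the infimum in \eqref{eq:lymax} is bounded below by any such $S(X')$, and in particular $S_{\mathrm{-}}(X) \leq S_{\mathrm{+}}(X)$.

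Next I would prove the lower bound. Fix a monotonic extension $S_{\mathrm{ext}}$ of $S$ and any $X' \in \Gamma$ with $X' \prec X$. Monotonicity of $S_{\mathrm{ext}}$ gives $S_{\mathrm{ext}}(X') \leq S_{\mathrm{ext}}(X)$, and because $X' \in \Gamma$ we have $S_{\mathrm{ext}}(X') = S(X')$. Taking the supremum over all such $X'$ yields $S_{\mathrm{-}}(X) \leq S_{\mathrm{ext}}(X)$. The upper bound follows symmetrically: for any $X'' \in \Gamma$ with $X \prec X''$, monotonicity gives $S_{\mathrm{ext}}(X) \leq S_{\mathrm{ext}}(X'') = S(X'')$, and taking the infimum yields $S_{\mathrm{ext}}(X) \leq S_{\mathrm{+}}(X)$.

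Finally, to confirm the statement is substantive, I would briefly note that $S_{\mathrm{-}}$ and $S_{\mathrm{+}}$ are themselves monotonic and extend $S$: for $X \in \Gamma$, reflexivity (E1) places $X$ in both defining sets, and monotonicity of $S$ on $\Gamma$ forces $S_{\mathrm{-}}(X) = S_{\mathrm{+}}(X) = S(X)$; monotonicity on $\Gamma_{\mathrm{ext}}$ is then immediate from the definitions via transitivity. The proof is essentially bookkeeping with the preorder; there is no real obstacle. The one point that must not be skipped is the role of N1, which guarantees non-emptiness of the defining sets so that the supremum and infimum are finite; without it the bounds would be vacuous at some states and the statement would fail.
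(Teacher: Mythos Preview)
The paper does not actually supply a proof of this proposition; it is quoted as a result of Lieb and Yngvason with a citation to their original work, and the paper only restates the conclusion $S_{-}(X)\leq S_{\mathrm{ext}}(X)\leq S_{+}(X)$ immediately afterward. Your argument is correct and is precisely the standard one: monotonicity of $S_{\mathrm{ext}}$ together with $S_{\mathrm{ext}}|_{\Gamma}=S$ gives the two inequalities, and N1 ensures the defining sets are non-empty so the bounds are finite. Your closing remark that $S_{-}$ and $S_{+}$ are themselves monotonic extensions of $S$ (hence the bounds are sharp) is a useful addition that the paper does not make explicit for these quantities.
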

This implies that for any state $X \in \Gamma_{\mathrm{ext}}$, the attained value $S_{\mathrm{ext}}(X)$ of such an extension lies between the values $S(X')$ and $S(X'')$ of its neighboring scalable states according to the order relation $\prec$:
\begin{equation*}
S_{\mathrm{-}}(X) \leq S_{\mathrm{ext}}(X) \leq S_{\mathrm{+}}(X).
\end{equation*} 

We will prefer to work with the following alternative quantities instead, which only rely on the state $X$ and not on any neighboring equilibrium states $X'$ and $X''$:
\begin{alignat}{2} \tilde{S}_{\mathrm{-}}(X) &= \sup &&\left\{ \lambda : ((1-\lambda) X_{\mathrm{0}}, \lambda X_{\mathrm{1}}) \prec X \right\}, \label{eq:condentr1}\\
\tilde{S}_{\mathrm{+}}(X) &= \inf &&\left\{ \lambda : X \prec ((1-\lambda) X_{\mathrm{0}}, \lambda X_{\mathrm{1}}) \right\} \label{eq:condentr2}
.\end{alignat}
Operationally, $\tilde{S}_{\mathrm{-}}$ specifies the portion of the system that can maximally be in state $X_{\mathrm{1}}$ if one wants to 
create the state $X$ by composing subsystems in states $X_{\mathrm{0}}$ and $X_{\mathrm{1}}$.
The minimal portion of $X_{\mathrm{1}}$ that can be obtained by transforming $X$ into a composition of two smaller systems in states $X_{\mathrm{0}}$ and $X_{\mathrm{1}}$ is characterized by $\tilde{S}_{\mathrm{+}}$.~\footnote{Note that if a state $X$ does not obey $X_0 \prec X \prec X_1$ the definitions are consistently extended like \eqref{eq:entropyoriginal1} and \eqref{eq:entropyoriginal2}.} These new entropic bounds are thus not just characterising a position of the non-equilibrium states within the ordered set of equilibrium states but they have an operational significance. The non-equilibrium states are considered based on the same processes as the equilibrium states, namely with monotonic quantities defined analogous to \eqref{eq:entropyoriginal1} and \eqref{eq:entropyoriginal2}.
Note that in thermodynamics the two sets of bounding quantities 
$\{S_{\mathrm{-}}, S_{\mathrm{+}}\}$ and $\{ \tilde{S}_{\mathrm{-}},\tilde{S}_{\mathrm{+}} \}$ 
coincide, as due to the continuity of the thermodynamic quantities an equilibrium state $X' \in \Gamma$ with 
$X' \sim ((1-\lambda) X_{\mathrm{0}}, \lambda X_{\mathrm{1}})$
exists for any $\lambda$. The two sets of bounds only differ if there are $\tilde{\lambda}$ such that the composed systems $((1-\tilde{\lambda}) X_{\mathrm{0}}, \tilde{\lambda} X_{\mathrm{1}})$ cannot be reversibly inter-converted with an equilibrium state, meaning that there does not exist a state $\tilde{X} \in \Gamma$ such that $\tilde{X} \sim ((1-\tilde{\lambda}) X_{\mathrm{0}}, \tilde{\lambda} X_{\mathrm{1}})$. In this case the interval $[\tilde{S}_{\mathrm{-}}, \tilde{S}_{\mathrm{+}}]$ 
may be smaller than $[S_{\mathrm{-}}, S_{\mathrm{+}}]$; 
the former may not contain all possible monotonic extensions $S_{\mathrm{ext}}$ of the entropy
function $S$. On the other hand, it may allow for a distinction of non-equilibrium states that lie between the same pair of equilibrium states. Thus, depending on the application one or the other set of bounding quantities may be preferred.
We will see in Section~\ref{sec:results} that the quantities $\tilde{S}_{\mathrm{-}}$ and $\tilde{S}_{\mathrm{+}}$ appear natural in our microscopic setting.
The following lemma confirms that the quantities \eqref{eq:condentr1} and \eqref{eq:condentr2} always lead to the same necessary as well as essentially the same sufficient conditions for state transformations as \eqref{eq:lymin} and \eqref{eq:lymax}.
\begin{lemma} \label{lemma:necsuff}
Let $X,~Y \in \Gamma_{\mathrm{ext}}$. Then, the following two conditions hold:
\begin{equation} \label{eq:c1}
\tilde{S}_{+}(X) < \tilde{S}_{-}(Y) \Rightarrow X \prec Y ,
\end{equation}
\begin{equation} \label{eq:c2}
X \prec Y \Rightarrow \tilde{S}_{-}(X) \leq \tilde{S}_{-}(Y) \text{ and } \tilde{S}_{+}(X) \leq \tilde{S}_{+}(Y).
\end{equation}
\end{lemma}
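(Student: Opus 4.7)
The plan is to obtain (c2) directly from transitivity (axiom E2) and to obtain (c1) by threading $X \prec Y$ through two intermediate composite states using the same axiom, with a technical monotonicity lemma in $\lambda$ as the key ingredient. Part (c2) is essentially a one-line bookkeeping argument: if $X \prec Y$ and $((1-\lambda)X_0, \lambda X_1) \prec X$, then by E2 also $((1-\lambda)X_0, \lambda X_1) \prec Y$, so every $\lambda$ eligible for the supremum defining $\tilde{S}_-(X)$ is eligible for that defining $\tilde{S}_-(Y)$, and hence $\tilde{S}_-(X) \leq \tilde{S}_-(Y)$; the argument for $\tilde{S}_+$ is symmetric (pull the relation $Y \prec ((1-\lambda)X_0,\lambda X_1)$ back through $X\prec Y$).

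The technical lemma I would prove next is the monotonicity claim
\begin{equation}
\mu \leq \lambda \quad \Longrightarrow \quad ((1-\mu)X_0, \mu X_1) \prec ((1-\lambda)X_0, \lambda X_1).
\label{eq:aux-mono}
\end{equation}
For $0 < \mu < \lambda < 1$, E5 (splitting and recombination) gives $(1-\mu)X_0 \sim ((1-\lambda)X_0, (\lambda-\mu)X_0)$ and $\lambda X_1 \sim (\mu X_1, (\lambda-\mu)X_1)$. Since $X_0 \prec\prec X_1$, E4 (scaling invariance) yields $(\lambda-\mu)X_0 \prec (\lambda-\mu)X_1$, and E3 (consistent composition) elevates this, in the presence of the untouched factors $(1-\lambda)X_0$ and $\mu X_1$, to \eqref{eq:aux-mono}. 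Boundary values $\mu = 0$, $\lambda = 1$ or $\mu = \lambda$ are trivial, and the case $\mu,\lambda \notin [0,1]$ follows from the extended reading of the bracket expressions introduced after Theorem~\ref{thm:entropythm}. With \eqref{eq:aux-mono} in hand, the set $A_X := \{\lambda : X \prec ((1-\lambda)X_0, \lambda X_1)\}$ is upward closed and $B_Y := \{\lambda : ((1-\lambda)X_0, \lambda X_1) \prec Y\}$ is downward closed. Assuming $\tilde{S}_+(X) < \tilde{S}_-(Y)$, pick $\lambda_1 < \lambda_2$ strictly between them: closure of $A_X$ and $B_Y$ gives $X \prec ((1-\lambda_1)X_0, \lambda_1 X_1)$ and $((1-\lambda_2)X_0, \lambda_2 X_1) \prec Y$, while \eqref{eq:aux-mono} with $\mu = \lambda_1$, $\lambda = \lambda_2$ supplies the middle link; two applications of E2 then yield $X \prec Y$.

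The only substantive obstacle is the monotonicity lemma \eqref{eq:aux-mono}: it is the unique place where several axioms (E3, E4, E5) combine nontrivially, and some attention is needed with the extended scaling conventions for $\mu$ or $\lambda$ outside $[0,1]$. Everything else --- the inclusion of supremum/infimum sets for (c2), the insertion of two intermediate composites for (c1), and the upward/downward closure of $A_X$ and $B_Y$ --- reduces to transitivity and straightforward set inclusion.
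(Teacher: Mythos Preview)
Your proposal is correct and follows essentially the same route as the paper. The paper also proves (c2) by a direct transitivity argument and (c1) by picking $\lambda<\lambda'$ with $X\prec((1-\lambda)X_0,\lambda X_1)$ and $((1-\lambda')X_0,\lambda'X_1)\prec Y$, then establishing the intermediate relation $((1-\lambda)X_0,\lambda X_1)\prec((1-\lambda')X_0,\lambda'X_1)$ via the same combination of E5, E4, and E3 that you package as your monotonicity lemma~\eqref{eq:aux-mono}; the only cosmetic difference is that the paper selects $\lambda,\lambda'$ directly from the defining sets of the infimum and supremum rather than first proving upward/downward closure.
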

Note that \eqref{eq:c1} and \eqref{eq:c2} have been shown to hold for $S_{\mathrm{-}}$ and $S_{\mathrm{+}}$ in Ref.~\cite{Lieb2013_entropy_noneq}. The following proof proceeds similarly.
\begin{proof}
To show \eqref{eq:c1}, let $\tilde{S}_{+}(X) < \tilde{S}_{-}(Y)$. Then there exist $\lambda < \lambda'$ such that $X \prec ((1-\lambda)X_0,~\lambda X_1)$ and $((1-\lambda')X_0,~\lambda' X_1) \prec Y$, where $X_0 \prec \prec X_1 \in \Gamma$. 
With E5 we can rewrite 
\begin{equation*}
((1-\lambda)X_0,~\lambda X_1) \sim ((1-\lambda')X_0,~(\lambda'-\lambda)X_0,~\lambda X_1)
\end{equation*}
as well as 
\begin{equation*}
((1-\lambda')X_0,~\lambda' X_1) \sim ((1-\lambda')X_0,~(\lambda'-\lambda)X_1,~\lambda X_1).
\end{equation*}
According to E4, $X_0 \prec X_1$ implies 
$(\lambda'-\lambda)X_0 \prec (\lambda'-\lambda)X_1$.
With E3, the above relations imply
\begin{equation*}
((1-\lambda)X_0, \lambda X_1) \prec ((1-\lambda')X_0, \lambda' X_1).
\end{equation*} 
Thus by transitivity (E2) $X \prec Y$.

For \eqref{eq:c2}, observe that if $((1-\lambda)X_0, \lambda X_1) \prec X$ and $X \prec Y$, then according to E2 also $((1-\lambda)X_0, \lambda X_1)\prec Y$. This implies that $\tilde{S}_{-}(X) \leq \tilde{S}_{-}(Y)$. Similarly, $Y \prec ((1-\lambda')X_0, \lambda' X_1)$ and $X \prec Y$, thus $X \prec ((1-\lambda')X_0, \lambda' X_1)$ and  $\tilde{S}_{+}(X) \leq \tilde{S}_{+}(Y)$.
\end{proof}

Note that in the case of a continuous entropy function $S$ on $\Gamma$, supremum and infimum in the definition of $\tilde{S}_-$ and $\tilde{S}_+$ are attained, which implies that in \eqref{eq:c1} the strict inequality can be replaced by $\leq$. This is in particular satisfied in thermodynamics. For the quantum resource theories considered later supremum and infimum are attained for rational entropy values.

\section{Introduction to quantum resource theories and information-theoretic entropy measures} \label{sec:resources}
In general, a resource theory is a means to investigate which tasks can be achieved or how certain states of systems may change, if the processes affecting a system are of a predefined class, often called the \textit{allowed operations}.

To quantify the utility of different states with respect to a given resource theory, values are assigned to them. This is usually achieved with so called \textit{monotones}, functions that are monotonic under the class of allowed operations. Intuitively, monotones quantify the variety of states that can be generated from each state and serve as a means to compare the latter. A state is called a resource if it cannot be prepared with operations from the allowed class only.

For quantum resource theories, the state space on which the allowed operations act is composed of density operators, i.e., positive semi-definite operators of unit trace on a Hilbert space $\mathcal{H}$. We denote the set of all density operators on $\mathcal{H}$ as $\mathcal{S}(\mathcal{H})$ and consider only finite dimensional Hilbert spaces.
In the following we introduce the two most prominent quantum resource theories, the resource theories of noisy and thermal operations. Note that $\log$ denotes the logarithm with respect to base $2$, while $\ln$ stands for the natural logarithm. 

\subsection{The resource theory of noisy operations} \label{sec:noisy}
The resource theory of noisy operations~\cite{Horodecki2003,Horodecki2013,Gour2013arXiv_resource}
is defined by the following class of allowed operations:
\begin{itemize}
	\item composition of the system with an ancillary system in a maximally mixed state;
	\item reversible transformation on the system and ancilla with any joint unitary;
	\item partial trace over any subsystem.
\end{itemize}
For finite dimensional systems, 
a state $\rho \in \mathcal{S}(\mathcal{H})$ can be transformed into a state $\tilde{\rho} \in \mathcal{S}(\mathcal{H})$ by noisy operations if and only if $\rho$ majorizes $\tilde{\rho}$~\cite{Horodecki2003}.
\begin{definition} \label{definition:majorization} 
Let $\rho$, $\tilde{\rho} \in \mathcal{S}(\mathcal{H})$ with $d = \dim{\mathcal{H}}$ and eigenvalues $p_{1} \geq p_{2} \geq \ldots \geq p_{d}$ and 
$\tilde{p}_{1} \geq \tilde{p}_{2} \geq \ldots \geq \tilde{p}_{d}$. 
The state $\rho$ \textbf{majorizes} the state $\tilde{\rho}$, denoted as $\rho \prec_{\mathrm{M}} \tilde{\rho}$~\footnote{To avoid confusion with Lieb and Yngvason's order relation later, we use a non-standard notation for the majorization relation $\prec_{\mathrm{M}}$,
which in particular differs from the notational convention used in~\cite{BookBhatiaMatrixAnalysis1997}.}, iff 
for all $k \in \left\{ 1,2, \ldots, d \right\}$
\begin{equation*} 
\sum_{i=1}^{k} p_{i} \geq \sum_{i=1}^{k} \tilde{p}_{i},
\end{equation*}
with equality for $k = d$.
\end{definition}
The spectrum of a state $\rho$ with ordered eigenvalues $p_{1} \geq p_{2} \geq \ldots \geq p_{d}$ can be represented as a step function,
\begin{equation} \label{eq:stepfunction}
f_{\mathrm{\rho}}(x) =
\begin{cases} 
p_{i}, &  i-1 \leq x \leq i,\\
0, &  \textrm{otherwise}.
\end{cases} 
\end{equation}
Majorization can be equivalently expressed in terms of $f_{\rho}$:
\begin{equation}
\rho \prec_{\mathrm{M}} \tilde{\rho}   \ \Leftrightarrow \ \int_{0}^{k}f_{\rho}(x) dx \ \geq \ \int_{0}^{k} f_{\tilde{\rho}}(x) dx, 
\quad \forall \ k \in \mathbbm{R}_{\mathrm{\geq 0}}.
\label{eq:majorfunct} \end{equation}
As $f_{\rho}(x)$ is monotonically decreasing in $x$ and due to the normalization $\int_{0}^{\infty}f_{\rho}(x) dx = 1$, 
the condition $k \in \left\{ 1,2, \ldots, d \right\}$ from Definition~\ref{definition:majorization} 
is equivalently replaced with $k \in \mathbbm{R}_{\geq 0}$.
Noisy operations and majorization are furthermore related to unital operations~\cite{Uhlmann1971_Dichtematrizen,Gour2013arXiv_resource,%
Chefles2002,mendl2009unital}.
\begin{definition} A \textbf{unital} quantum operation on $\mathcal{S}(\mathcal{H})$ is a trace preserving completely positive map that preserves the identity operator $\mathbbm{1}_{\dim(\mathrm{\mathcal{H}})}$.
\end{definition}
Note that a positive map $M$ is completely positive if
$M \otimes \mathbbm{1}_{\mathbb{C}^{d \times d}}$ is positive for any $d$, where $\mathbbm{1}_{\mathbb{C}^{d \times d}}$ is the identity on $\mathbb{C}^{d \times d}$.

\begin{prop} \label{prop:unital}
For two density operators $\rho$, $\tilde{\rho} \in \mathcal{S}(\mathcal{H})$ the following are equivalent:
\begin{itemize}
\item There exists a noisy operation that achieves the transformation $\rho \rightarrow \tilde{\rho}$.
\item There exists a unital quantum operation that achieves the transformation
$\rho \rightarrow \tilde{\rho}$.
\item The state $\rho$ majorizes the state $\tilde{\rho}$, denoted $\rho \prec_{\mathrm{M}} \tilde{\rho}$.
\end{itemize}
\end{prop}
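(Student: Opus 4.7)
The plan is to establish the three-way equivalence by closing the cycle $(1) \Rightarrow (2) \Rightarrow (3) \Rightarrow (1)$, where $(1)$, $(2)$, $(3)$ label the three bullet points of the proposition. The first two implications are structural checks; the main obstacle is the constructive direction $(3) \Rightarrow (1)$, which requires building an explicit noisy operation from a given majorization relation.

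For $(1) \Rightarrow (2)$, I would verify that each elementary building block of a noisy operation is unital and CPTP, so that the full composition $\mathcal{E}(\rho) = \operatorname{tr}_A[U(\rho \otimes \mathbbm{1}_A/d_A)U^{\dagger}]$ inherits these properties. A direct evaluation gives $\mathcal{E}(\mathbbm{1}_S) = \operatorname{tr}_A[U(\mathbbm{1}_S \otimes \mathbbm{1}_A/d_A)U^{\dagger}] = \operatorname{tr}_A[\mathbbm{1}_S \otimes \mathbbm{1}_A/d_A] = \mathbbm{1}_S$, so $\mathcal{E}$ is unital, and the CPTP property follows from each elementary step being CPTP.

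For $(2) \Rightarrow (3)$, I would show that a unital CPTP map induces a doubly stochastic action on eigenvalues. Diagonalising $\rho = \sum_i p_i |e_i\rangle\langle e_i|$ and $\tilde{\rho} = \sum_j \tilde{p}_j |f_j\rangle\langle f_j|$, one obtains $\tilde{p}_j = \sum_i D_{ji} p_i$ with $D_{ji} = \langle f_j|\mathcal{E}(|e_i\rangle\langle e_i|)|f_j\rangle \geq 0$. Unitality yields $\sum_i D_{ji} = \langle f_j|\mathcal{E}(\mathbbm{1})|f_j\rangle = 1$, and trace preservation yields $\sum_j D_{ji} = 1$, so $D$ is doubly stochastic. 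The Hardy--Littlewood--P\'olya theorem then identifies this condition with the majorization relation $\rho \prec_{\mathrm{M}} \tilde{\rho}$ in the convention of Definition~\ref{definition:majorization}.

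For the hard direction $(3) \Rightarrow (1)$, the plan is to invoke Birkhoff's theorem to decompose the doubly stochastic matrix $D$ witnessing the majorization as a convex combination $D = \sum_k q_k P_k$ of permutation matrices. Lifting the $P_k$ to unitaries $V_k$ in the eigenbasis of $\rho$ yields $\tilde{\rho} = U_f \bigl(\sum_k q_k V_k \rho V_k^{\dagger}\bigr) U_f^{\dagger}$ for a final basis-change unitary $U_f$ mapping the eigenbasis of $\rho$ to that of $\tilde{\rho}$. To realise this random-unitary channel as a \emph{noisy} operation, I would approximate the weights by rationals $q_k = n_k/N$, partition an $N$-dimensional maximally mixed ancilla $\mathbbm{1}_A/N$ into blocks of sizes $n_k$, and define the joint unitary to apply $V_k$ conditioned on the ancilla block index; tracing out the ancilla then reproduces the target mixture exactly. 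The main obstacle is precisely this mismatch between ``arbitrary random unitary'' and ``noisy operation with maximally mixed ancilla'': irrational weights have to be handled by a continuity/closure argument on the set of achievable output states (or, at the level of the induced ordering, by appealing to the stability axiom E6).
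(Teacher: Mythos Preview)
The paper does not supply its own proof of this proposition; immediately after the statement it writes ``This is proven for instance in~\cite{Uhlmann1971_Dichtematrizen,Gour2013arXiv_resource}'' and moves on. Your cycle $(1)\Rightarrow(2)\Rightarrow(3)\Rightarrow(1)$ is precisely the standard argument found in those references: the unitality check for $(1)\Rightarrow(2)$ and the doubly-stochastic/Hardy--Littlewood--P\'olya step for $(2)\Rightarrow(3)$ are correct as you wrote them, and the Birkhoff-based construction for $(3)\Rightarrow(1)$ is the usual route.

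You also correctly isolate the one genuine subtlety in $(3)\Rightarrow(1)$: a finite maximally mixed ancilla only yields rational convex weights in the controlled-unitary construction, so an arbitrary Birkhoff decomposition is realised exactly only after passing to a limit. This is not a gap in your reasoning but a known feature of the resource theory---exact noisy operations form a proper (dense) subset of the channels inducing majorization---and the cited references handle it exactly as you suggest, via a closure argument. The paper itself later relies on the same construction from~\cite{Horodecki2003} when proving its Lemma on adiabatic processes, again without spelling out the approximation step.
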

This is proven for instance in~\cite{Uhlmann1971_Dichtematrizen,Gour2013arXiv_resource}. Note that majorization is also tightly related to the resource theory of entanglement~\cite{Nielsen1999,Nielsen2001,Nielsen2001b,Horodecki2009}.

The resource theory of noisy operations features numerous monotonic functions~\cite{Gour2013arXiv_resource, Karamata}.
Arguably the most popular monotones are the
R\'{e}nyi entropies~\cite{Renyi1960_MeasOfEntrAndInf}.
\begin{definition} 
The  \textbf{$\alpha$-R\'{e}nyi entropy} of a density operator $\rho \in \mathcal{S}(\mathcal{H})$ is defined as 
\begin{equation*} 
H_{\mathrm{\alpha}}(\rho):=\frac{1}{1-\alpha} \log \operatorname{tr}(\rho^{\alpha}).  \end{equation*}
\end{definition}
The limit $\alpha \rightarrow 1$ yields the von Neumann entropy
$H(\rho)=-\operatorname{tr}( \rho \log \rho)$.
Furthermore, for $\alpha \rightarrow \infty$ and $\alpha=0$ we recover two quantities from the smooth entropy framework, 
the min-entropy, $H_{\mathrm{min}}$, and the max-entropy, $H_{\mathrm{max}}$~\cite{PhdRenner2005_SQKD,PhDTomamichel2012}.
\begin{definition} 
For a density operator $\rho \in \mathcal{S}(\mathcal{H})$ its \textbf{min and max-entropies} are
\begin{align} H_{\mathrm{min}}(\rho) &:= - \log \|\rho\|_{\infty}, \label{eq:hmin-2}\\
H_{\mathrm{max}}(\rho) &:= \log \operatorname{rank}\rho \label{eq:hmax-2}, \end{align}
where $\|\rho\|_{\infty}$ denotes the maximal eigenvalue of $\rho$.%
\footnote{%
  This definition of the max-entropy corresponds to the one in
  Ref.~\cite{PhdRenner2005_SQKD}, and differs from that in
  Refs.~\cite{Koenig2009IEEE_OpMeaning,Tomamichel2009IEEE_AEP,Tomamichel2010IEEE_Duality,PhDTomamichel2012}.
  The former is the R\'enyi entropy of order zero, while the latter is the R\'enyi entropy
  of order $\nicefrac12$.  Note, however, that all R\'enyi entropies with $\alpha<1$
  attain similar values (after smoothing)~\cite{Tomamichel2011TIT_LeftoverHashing}, and
  the term ``max-entropy'' should be understood as designating this class of similar
  measures.
}
\end{definition}

\subsection{The resource theory of thermal operations} \label{sec:thermal}
The resource theory of thermal operations 
describes quantum systems interacting with a thermal environment at a temperature $T$~\cite{Janzing2000_cost,Brandao2013_resource,Horodecki2013_ThermoMaj}.
The allowed operations in this framework are:
\begin{itemize}
	\item composition with an ancillary system (of arbitrary Hamiltonian) in a thermal state relative to the heat bath at temperature $T$;
	\item reversible and energy-conserving transformation of the system and the ancilla;
	\item removal of any subsystem.
\end{itemize} 

Thermal states, also called Gibbs states, are of the form 
\begin{equation} 
\tau = \sum_{i} \frac{e^{-\beta E_i}}{Z}  \left| E_i \right\rangle \left\langle E_i \right|, \label{eq:thermal} 
\end{equation}
where $Z$ is the partition function and the $\left\{\left|E_i\right\rangle\right\}_i$ are the energy eigenstates of a corresponding Hamiltonian $H$; 
the constant $\beta=\frac{1}{k_{\mathrm{B}}T}$ is inversely proportional to the temperature $T$ and $k_{\mathrm{B}}$ denotes the Boltzmann constant. 
These states are preserved under thermal operations, while all athermal states are resource states~\cite{Brandao2015,Horodecki2013_ThermoMaj}.
In the above definition the energy levels may be degenerate. In the extreme case of a system with a trivial Hamiltonian, i.e., a system where all energy levels are degenerate, the resource theory of thermal operations is equivalent to the resource theory of noisy operations.

Closely connected to the resource theory of thermal operations is the order relation of thermo-majorization, which may be defined in terms of a rescaled step function~\cite{Horodecki2013_ThermoMaj,Egloff2012}.
\begin{definition} \label{definition:Gibbs}
Let $\rho \in \mathcal{S}(\mathcal{H})$
be a density matrix that is block diagonal in the energy eigenbasis and let $d=\dim\mathcal{H}$.
Represent its spectrum as in \eqref{eq:stepfunction}.
The \textbf{Gibbs-rescaled step function} of $\rho$ is 
\begin{equation*} 
f^{\mathrm{T}}_{\mathrm{\rho}}(x)= \begin{cases} \frac{p_{\mathrm{i}}}{e^{-\beta E_i}}, &  \sum_ {k=1}^{i-1}  e^{-\beta E_k} \leq x \leq \sum_ {k=1}^{i}  e^{-\beta E_k},\\
 0, &  \textrm{otherwise,} 
\end{cases}
\end{equation*}
where the eigenvalues are reordered such that $\frac{p_{1}}{e^{-\beta E_1}} \geq \frac{p_{2}}{e^{-\beta E_2}} \geq \ldots \geq \frac{p_{d}}{e^{-\beta E_d}}$; $E_i$ denotes the $i$-th energy eigenvalue.\footnote{Note that several energy eigenvalues may be identical.}
\end{definition}
Thermo-majorization may be defined in terms of Gibbs-rescaled step functions, analogous to the formulation of majorization in \eqref{eq:majorfunct}.
\begin{definition} \label{definition:thermom}
Let $\rho$ and $\sigma \in \mathcal{S}(\mathcal{H})$ be two quantum states that are both block diagonal in the energy eigenbasis. 
The order relation of \textbf{thermo-majorization}, $\prec_{\mathrm{T}}$, is defined as
\begin{equation*}
\rho \prec_{\mathrm{T}} \sigma \ \Leftrightarrow \ \int_{0}^{k}f_{\mathrm{\rho}}^{\mathrm{T}}(x) dx \ \geq \ \int_{0}^{k} 
f_{\mathrm{\sigma}}^{\mathrm{T}}(x) dx, \quad \forall \ k \in \mathbbm{R}_{\mathrm{\geq 0}}.
 \end{equation*}
\end{definition}
Note that this definition is equivalent to the one in~\cite{Horodecki2013_ThermoMaj}.
Furthermore, for block diagonal states in the energy eigenbasis,
thermal operations can be characterized in terms of 
$\prec_{\mathrm{T}}$~\cite{Horodecki2013_ThermoMaj}. 
\begin{prop}[Horodecki \& Oppenheim] \label{prop:HO}
Let $\rho$ and $\sigma \in \mathcal{S}(\mathcal{H})$ be two states that are block diagonal in the energy eigenbasis.~\footnote{Note that unitary operations that commute with the Hamiltonian are free operations. They allow for the inter conversion of such states with the corresponding diagonalised ones.} 
Then there exists a thermal operation that achieves the transformation $\rho \rightarrow \sigma$ iff the state $\rho$ thermo-majorizes the state $\sigma$, denoted $\rho \prec_{\mathrm{T}} \sigma$.~\footnote{
Note that the result in~\cite{Horodecki2013_ThermoMaj} is actually more general: it also applies to initial states $\rho$ which are not 
block diagonal in the energy eigenbasis, as long as the final states are.
A thermal operation $\rho \rightarrow \sigma$ is then equivalent to an ordering $\rho_{\mathrm{dec}} \prec_{\mathrm{T}} \sigma$,
where $\rho_{\mathrm{dec}}$ denotes the state $\rho$ decohered in the energy eigenbasis.
The reason for this is that thermal operations commute with a decohering operation in the energy eigenbasis, which is mathematically a projection of the state onto its energy eigenspaces. 
Therefore, the transitions $\rho_{\mathrm{dec}} \rightarrow \sigma$
is equivalent to $\rho \rightarrow\sigma_{\mathrm{dec}}=\sigma$.}
\end{prop}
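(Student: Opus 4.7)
The plan is to prove both directions by reducing to a classical problem about Gibbs-preserving stochastic matrices, for which thermo-majorization has a clean Hardy--Littlewood--P\'olya-type characterization, and then showing that such classical maps can be realised as genuine thermal operations.

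First I would handle the implication ``thermal operation $\Rightarrow$ thermo-majorization''. Since $\rho$ and $\sigma$ are block-diagonal in the energy eigenbasis, and thermal operations commute with the decoherence map in that basis (the allowed unitaries conserve energy and the ancilla is thermal, hence diagonal), the induced map acts on the diagonals as a stochastic matrix $T$ with $Tp=q$, where $p,q$ are the eigenvalue vectors of $\rho$ and $\sigma$. Moreover, the joint Gibbs state of system plus ancilla is invariant under every energy-conserving unitary and is preserved by partial trace over a Gibbs-distributed ancilla, so $T g = g$ for the Gibbs vector $g_i = e^{-\beta E_i}/Z$. A direct computation, based on the definition of the Gibbs-rescaled step function, then shows that any Gibbs-preserving stochastic $T$ satisfies $\int_0^k f_\rho^{\mathrm{T}} \geqslant \int_0^k f_\sigma^{\mathrm{T}}$ for every $k\in\mathbbm{R}_{\geqslant 0}$, i.e.\ $\rho \prec_{\mathrm{T}} \sigma$.

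For the converse, I would use the \emph{Gibbs-rescaling embedding} that turns thermo-majorization into ordinary majorization. Assuming for the moment that the weights $e^{-\beta E_i}/Z$ are rational, each energy level $i$ is split into $n_i$ copies so that the Gibbs state becomes the uniform distribution on an $N$-level system. Under this embedding, $\rho\prec_{\mathrm{T}}\sigma$ translates exactly into $\tilde p \prec_{\mathrm{M}}\tilde q$ for the embedded vectors, by comparing the step-function integrals on both sides. By the classical Hardy--Littlewood--P\'olya theorem (equivalently Birkhoff's theorem), this majorization yields a doubly stochastic matrix mapping $\tilde p$ to $\tilde q$, and pulling it back through the embedding produces a Gibbs-preserving stochastic matrix $T$ with $Tp=q$.

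The remaining and most delicate step is to exhibit an explicit thermal operation realising $T$. The standard construction couples the system to an ancillary bath with Hamiltonian chosen so that the Gibbs weights of appropriate energy levels match the multiplicities $n_i$ used in the embedding; then a suitable permutation of joint energy eigenstates, which automatically conserves total energy, together with a thermal ancilla and a partial trace, reproduces the action of $T$ on the system. The main obstacle I anticipate is twofold: verifying in detail that the required permutation is energy-conserving (which forces a careful bookkeeping of degeneracies in the ancilla spectrum), and lifting the result from rational to arbitrary Gibbs weights via a continuity/approximation argument, closing the thermal-operation set under the appropriate limit so that the characterization holds for all block-diagonal $\rho$ and $\sigma$.
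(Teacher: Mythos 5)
The paper does not prove this proposition at all: it is imported verbatim from Horodecki and Oppenheim's work, cited as \cite{Horodecki2013_ThermoMaj}, and the authors only use it as a black box to justify replacing thermal operations by the relation $\prec_{\mathrm{T}}$. There is therefore no in-paper proof to compare against; what you have written is, in outline, the standard argument from the cited reference itself (and from the analogous Janzing et al.\ construction): forward direction via the induced Gibbs-preserving stochastic matrix on the populations, converse via the rational-weight embedding that converts thermo-majorization into ordinary majorization, Birkhoff/Hardy--Littlewood--P\'olya, and an explicit energy-conserving permutation on system plus a suitably chosen thermal ancilla, with a continuity argument for irrational Gibbs weights. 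Two small points of care: in the forward direction, the claim that every Gibbs-preserving stochastic map is monotone for $\prec_{\mathrm{T}}$ is true but is itself most cleanly proved through the same embedding (or through the Blackwell/Ruch characterization of $d$-majorization), so calling it a ``direct computation'' understates the work; and in the converse you only need a thermal operation realising the \emph{transition} $\rho\rightarrow\sigma$, not one whose induced classical channel equals the particular $T$ produced by Birkhoff's theorem --- the latter is stronger than necessary and not what the standard construction delivers. With those caveats, the sketch is sound and correctly identifies where the technical effort lies (degeneracy bookkeeping for the energy-conserving permutation and the limit to irrational weights).
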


As shown in~\cite{Brandao2015}, a family of measures that are monotonic under thermal operations\footnote{Note that~\cite{Brandao2015} derived monotonicity for the more general class of thermal operations with catalysis. The results on monotonicity carry over to the subset of thermal operations.} for block diagonal states $\rho$ and for all $\alpha \geq 0$ are 
\begin{equation*} 
F_{\alpha}(\rho) := k_{\mathrm{B}}TD_{\alpha}(\rho||\tau)\ln(2) + F(\tau), 
\end{equation*}
where $\tau$ is the thermal state of the system, $Z_{\mathrm{\tau}}$ is its partition function, and $F(\tau):=-k_{\mathrm{B}}T\ln Z_{\mathrm{\tau}}$ is its free energy. 
For commuting states $\rho$ and $\tau$, the R\'{e}nyi divergences $D_{\alpha}(\rho||\tau)$ for
$\alpha \geq 0$  are
\begin{equation*} 
D_{\alpha}(\rho||\tau) := \frac{1}{\alpha - 1} \log \sum_{i} p_{i}^{\alpha}t_{i}^{1-\alpha},
\end{equation*}
where $p_{i}$ are the eigenvalues of $\rho$ and $t_{i}$ are the eigenvalues of $\tau$.
These measures include the $F_{\mathrm{min}}$ for $\alpha = 0$ and the $F_{\mathrm{max}}$ in the limit $\alpha \rightarrow \infty$: 
\begin{equation} 
F_{\mathrm{min}}(\rho) := k_{\mathrm{B}}TD_{\mathrm{0}}(\rho||\tau) \ln(2) + F(\tau),
\label{eq:fmin} 
\end{equation}
\begin{equation} 
F_{\mathrm{max}}(\rho) := k_{\mathrm{B}}TD_{\mathrm{\infty}}(\rho||\tau) \ln(2) + F(\tau).\label{eq:fmax} 
\end{equation}
Note that $D_{\mathrm{0}}(\rho||\tau)= - \log \operatorname{tr} \Pi_{\mathrm{\rho}} \tau$, where $\Pi_{\mathrm{\rho}}$ is the projector onto the support of $\rho$, and $D_{\mathrm{\infty}}(\rho||\tau)=\log \min\left\{ \lambda: \rho \leq \lambda \tau \right\}$ correspond to the relative entropies with respect to the thermal state of the system~\cite{Datta2009IEEE,Koenig2009IEEE_OpMeaning}. 
For a thermal state all $F_{\alpha}$ coincide.

The two quantities $F_{\mathrm{min}}$ and $F_{\mathrm{max}}$ were originally introduced to describe the extractable work as well as the 
work needed to form a state~\cite{Horodecki2013_ThermoMaj}.
Assuming no errors for these two processes and that the states $\rho \in \mathcal{S}(\mathcal{H})$ 
are block-diagonal in the energy eigenbasis, 
the extractable work under thermal operations is 
\begin{equation*} 
W_{\mathrm{ext}}= F_{\mathrm{min}}(\rho)-F_{\mathrm{min}}(\tau), 
\end{equation*}
whereas the work of formation is
\begin{equation*} 
W_{\mathrm{form}}= F_{\mathrm{max}}(\rho)-F_{\mathrm{max}}(\tau).
\end{equation*}
In the thermodynamic limit the extractable work of a state
$\rho \in \mathcal{S}(\mathcal{H})$ is 
\begin{equation*}
W(\rho)= F(\rho)-F(\tau);
\end{equation*}
the same quantity is used to describe the work of formation~\cite{Horodecki2013_ThermoMaj}.

\section{Application to microscopic systems} \label{sec:results}
The axiomatic framework introduced in Section~\ref{sec:ly} was explicitly devised for macroscopic systems considered from the perspective of phenomenological thermodynamics. We describe these systems from a microscopic viewpoint based on the same axiomatic approach. Section~\ref{sec:model} gives additional details on our microscopic model, while Section~\ref{sec:technical} elaborates on the proof of Proposition~1 from the main text.

\subsection{Description of our microscopic model} \label{sec:model}
In our model, the set of states $\Gamma_{\mathrm{ext}}$ consists of all density operators on a Hilbert space $\mathcal{S}(\mathcal{H})$. All our considerations are restricted to finite dimensional Hilbert spaces $\mathcal{H}$.
The subset of ``equilibrium states'', $\Gamma$, is defined as comprising those states with a
uniform spectrum, i.e., for which all non-zero eigenvalues are equal.

We express the adiabatic processes (as introduced in the main text and detailed also in Section~\ref{sec:ly}) 
by concrete physical operations that consist of the following steps:
\begin{itemize}
\item composition with an extra ancilla system in an ``equilibrium state'';
\item reversible and energy-preserving interaction of the system and the ancilla with a weight system;
\item removal of the ancilla, whose final reduced
state must be the same as its initial state.
\end{itemize}
We denote such processes with the symbol $\stackrel{A}{\rightarrow}$.

As in Lieb and Yngvason's axiomatic framework, there is no restriction on the size of the ancillary system; it may be an environment that is larger than the system of interest and it may even be macroscopic.
The ancilla is removed by tracing it out, where its reduced state has to coincide with its initial one. Thus, the ancilla system, if probed, looks unchanged. No effect is observed by looking at the system's environment (except for the change in the weight system detailed in the following).

The interaction with the weight implements any unitary transformation on the system and the ancilla. 
An explicit model of the weight introduced in Ref.~\cite{Aberg2014PRL_catalytic} represents the idea of an adiabatic process particularly well. 
The weight system $W$ in question has a Hamiltonian
\begin{equation*} 
H_{\mathrm{W}}:= s \sum_{w} w \left| w \right\rangle \left\langle w \right|
\end{equation*} 
corresponding to an energy ladder, where the $\{\left| w \right\rangle\}_{w}$ are orthonormal states 
and the constant $s \in \mathbbm{R}_{\mathrm{\geq 0}}$ defines the energy level spacing of the Hamiltonian.
The weight, assumed to be in a state 
$\sigma = \left| \eta_{\mathrm{L,l_{0}}} \right\rangle \left\langle \eta_{\mathrm{L,l_{0}}} \right|$
with $\left| \eta_{\mathrm{L,l_{\mathrm{0}}}} \right\rangle = \frac{1}{\sqrt{L}}\sum_{w=0}^{L-1} \left| w+l_{\mathrm{0}} \right\rangle$,
is connected to a quantum system $SA$ with Hamiltonian $H_{\mathrm{SA}}$ (in our case $SA$ consists of the system and the ancilla). 
The operations on the system and the weight commute with the weight's translation along the ladder, independently of its absolute energetic state. Only the relative change in energy of the weight, which also corresponds to the energy that is added to or removed from the system and the ancilla, influences the latter.  These operations furthermore conserve the total energy of system, ancilla, and weight combined, in the sense that they commute with $H_{\mathrm{SA}}\otimes \mathbbm{1}_{\mathrm{W}} +\mathbbm{1}_{\mathrm{SA}} \otimes H_{\mathrm{W}}$, where $H_{\mathrm{SA}}$ is the Hamiltonian on system and ancilla and $H_{\mathrm{W}}$ is the Hamiltonian of the weight~\cite{Aberg2014PRL_catalytic,Skrzypczyk2014NComm_individual}.
For large $L$, i.e., for a sufficiently coherent state, these operations on the system $SA$ and the weight allow for the implementation of any unitary operation on $SA$. 
The interaction with the weight thus reduces to the application of arbitrary unitaries to the system and the ancilla, using the weight system catalytically. 

We emphasize that our considerations are, however, not restricted to this particular weight model. For instance in Ref.~\cite{Allahverdyan2004}, an arbitrary potential that is switched on for a time interval is considered instead of an explicit weight system. This model also leads to arbitrary unitary dynamics.

The weight system effectively eliminates the role of energy from the processes on the system and the ancilla or, alternatively, allows us to change their Hamiltonian at will. It is thus not altogether surprising, that the above processes $\stackrel{A}{\rightarrow}$ resemble the noisy operations introduced in Section~\ref{sec:noisy}. However, they allow the composition with any equilibrium ancilla yet require us to return the ancillas to their initial state. The following lemma shows that these processes are also characterized by the majorization relation, and, that they are thus equivalent to noisy operations in the sense that they allow for the same state transformations.
\begin{lemma}
For two states $\rho$, $\sigma \in \mathcal{S}(\mathcal{H}_{\mathrm{S}})$ the following are equivalent:
\begin{itemize}
\item $(A)$: The spectrum of $\rho$ majorizes the spectrum of $\sigma$, i.e., $\rho \prec_{\mathrm{M}} \sigma$.
\item $(B)$: There exists a process $\rho \stackrel{A}{\rightarrow} \sigma$.
\end{itemize}
\end{lemma}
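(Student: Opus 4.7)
My overall plan is to prove both directions by reducing to Proposition~3 (Horodecki's characterisation: noisy operations $\Leftrightarrow$ majorization). The $\stackrel{A}{\rightarrow}$ process differs from a noisy operation in two respects: the ancilla may be any flat state $\eta$ rather than the fully maximally mixed state on $\mathcal{H}_A$, and the ancilla must be returned to its initial state. I will exploit the first freedom to construct transformations in $(A)\Rightarrow (B)$, and exploit the second as extra structure in $(B)\Rightarrow (A)$.

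For $(B) \Rightarrow (A)$, suppose $\rho \stackrel{A}{\rightarrow} \sigma$ is witnessed by $(\mathcal{H}_A, \eta, U)$. Let $r = \operatorname{rank}(\eta)$ and $\mathcal{H}_{A'} = \operatorname{supp}(\eta)$; flatness forces $\eta = \mathbbm{1}_{A'}/r$ on $\mathcal{H}_{A'}$. The joint output $\omega := U(\rho\otimes\eta)U^{\dagger}$ has marginal $\eta$ on $A$ by hypothesis, and since $\eta$ is supported on $\mathcal{H}_{A'}$, a standard positivity argument gives that $\omega$ itself is supported on $\mathcal{H}_S \otimes \mathcal{H}_{A'}$. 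The input $\rho\otimes\eta$ is also supported there, and the two density operators share a spectrum (being unitarily related via $U$). Hence there exists a unitary $\tilde U$ on $\mathcal{H}_S\otimes\mathcal{H}_{A'}$ with $\tilde U(\rho\otimes\eta)\tilde U^{\dagger} = \omega$. The triple $(\mathcal{H}_{A'}, \mathbbm{1}_{A'}/r, \tilde U)$ is then a noisy operation realising $\rho\to\sigma$, and Proposition~3 yields $\rho\prec_{\mathrm{M}}\sigma$.

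For $(A) \Rightarrow (B)$, I would invoke Uhlmann's theorem to write $\sigma = \sum_k p_k V_k \rho V_k^{\dagger}$ for some probabilities $p_k$ and unitaries $V_k$ on $\mathcal{H}_S$. Assume first that the $p_k$ are rational, $p_k = n_k/N$. Take $\mathcal{H}_A=\mathbb{C}^N$ in the flat state $\eta = \mathbbm{1}_A/N$, and define the controlled unitary
\begin{equation*}
    U = \sum_{\ell=1}^N V_{f(\ell)} \otimes |\ell\rangle\langle \ell|_A ,
\end{equation*}
where $f$ is a surjection with $|f^{-1}(k)|=n_k$. A direct computation gives
\begin{equation*}
    U(\rho\otimes\eta)U^{\dagger} = \frac{1}{N}\sum_{\ell=1}^N V_{f(\ell)}\rho V_{f(\ell)}^{\dagger} \otimes |\ell\rangle\langle\ell|_A ,
\end{equation*}
whose $S$-marginal equals $\sigma$ and whose $A$-marginal equals $\mathbbm{1}_A/N=\eta$, so the ancilla returns to its initial state and we have an $\stackrel{A}{\rightarrow}$ witness.

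The main obstacle is the irrational-weight case, where no uniform representation of $\sigma$ by finitely many unitaries is immediately at hand. My plan is to argue that the convex hull of the unitary orbit $\{V\rho V^{\dagger}\}$ (which equals the majorization cone of $\rho$) is already exhausted by uniform random-unitary mixtures $(1/r)\sum_k V_k\rho V_k^{\dagger}$ for sufficiently large $r$ and a suitably chosen collection of $V_k$; the two-dimensional analysis is suggestive, since rotating the $V_k$ lets the uniform mixture hit any allowed target even when the ``natural'' Birkhoff weights are irrational. If a fully clean finite-dimensional construction proves elusive, I would fall back on the fact that Horodecki's $\Leftrightarrow$ is exact over finite-dimensional ancillas, and upgrade the noisy-operation unitary obtained there to an ancilla-preserving one by enlarging the ancilla and composing with a permutation that refreshes the ``used'' part of the ancilla from an unused flat sector. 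This uniformisation step is really the only technically delicate point; everything else reduces cleanly to Proposition~3.
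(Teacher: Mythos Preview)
Your argument for $(B)\Rightarrow(A)$ is correct and genuinely different from the paper's. You restrict to the support $\mathcal{H}_{A'}$ of the flat ancilla, observe that the ancilla-return condition forces the joint output to live on $\mathcal{H}_S\otimes\mathcal{H}_{A'}$, and then use isospectrality to manufacture a bona fide noisy operation on the smaller space. The paper instead shows directly that every $\stackrel{A}{\rightarrow}$ process is unital: feeding in the maximally mixed state on $S$ and invoking subadditivity of the von Neumann entropy together with the ancilla-return condition forces the $S$-marginal of the output to have full entropy, hence to be maximally mixed. Your route is more constructive; theirs is a clean entropy estimate. Both reduce to Proposition~3.

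Your $(A)\Rightarrow(B)$, however, has a real gap. The controlled-unitary construction is fine for rational Uhlmann weights, but the irrational case is left unresolved: the claim that uniform averages $\tfrac{1}{r}\sum_k V_k\rho V_k^\dagger$ exhaust the majorization cone is plausible in low dimension but not proven, and the ``refresh'' fallback does not work as stated---swapping the used ancilla into a fresh flat sector leaves the enlarged ancilla with a non-flat marginal (one factor is now in the post-$U$ reduced state $\tau$, not $\mathbbm{1}/d_A$), so the return condition fails. The paper sidesteps this entirely. It simply cites the explicit construction in Horodecki--Horodecki--Oppenheim's original proof that majorization implies the existence of a noisy operation, and observes that the particular unitary built there already satisfies $\operatorname{tr}_S\bigl(U_{SA}(\rho\otimes\tfrac{\mathbbm{1}_A}{d_A})U_{SA}^\dagger\bigr)=\tfrac{\mathbbm{1}_A}{d_A}$. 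Hence that noisy operation is automatically an $\stackrel{A}{\rightarrow}$ process, with no case distinction and no approximation needed.
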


\begin{proof}
(A) $\Rightarrow$ (B):
If $\rho\prec_{\mathrm{M}}\sigma$, then there exists a noisy operation bringing $\rho$ to $\sigma$~\cite{Horodecki2003}, 
i.e., there exists a unitary $U_{\mathrm{SA}}$ acting on an additional ancillary system $A$ such that
\begin{equation*} 
\operatorname{tr}_{\mathrm{A}} \left(U_{\mathrm{SA}} \left( \rho \otimes \frac{\mathbbm{1}_{\mathrm{A}}}{\dim(\mathcal{H}_{\mathrm{A}})} \right) U_{\mathrm{SA}}^{\dagger}\right)=\sigma,
\end{equation*}
where $\frac{\mathbbm{1}_{\mathrm{A}}}{\dim(\mathcal{H}_{\mathrm{A}})} \in \mathcal{S}(\mathcal{H}_{\mathrm{A}})$ is a maximally mixed ancilla and $U_{\mathrm{SA}}^{\dagger}$ denotes the adjoint of $U_{\mathrm{SA}}$. Note that noisy operations allow for the composition with ancilla systems $A$ of arbitrary dimension. In the original proof of this, Horodecki and Oppenheim construct particular operations, for which the unitary $U_{\mathrm{SA}}$ does not change the reduced state on the ancillary system, i.e., 
\begin{equation*} 
\operatorname{tr}_{\mathrm{S}}\left(U_{\mathrm{SA}} \left( \rho \otimes \frac{\mathbbm{1}_{\mathrm{A}}}{\dim(\mathcal{H}_{\mathrm{A}})} \right) U_{\mathrm{SA}}^{\dagger}\right)=
\frac{\mathbbm{1}_{\mathrm{A}}}{\dim(\mathcal{H}_{\mathrm{A}})}.
\end{equation*}
Thus, the ancilla is traced out in the maximally mixed state in which it was added and the process 
is an adiabatic process $\rho \stackrel{A}{\rightarrow} \sigma$ according to our definition.

(B) $\Rightarrow$ (A): As stated in Proposition~\ref{prop:unital}, $\rho \prec_{\mathrm{M}} \sigma$ is equivalent to 
the existence of a unital map from $\rho$ to $\sigma$.
In the following, we show that the processes $\rho \stackrel{A}{\rightarrow} \sigma$ are unital and thus also imply the ordering $\rho \prec_{\mathrm{M}} \sigma$. 
Now let $\chi \in \mathcal{S}(\mathcal{H}_{\mathrm{A}})$ be a state with a flat spectrum, 
i.e., it can be written as $\chi= \sum_{l=1}^{d} \frac{1}{d} \left| l \right\rangle \left\langle l\right|$ with $d \leq \dim(\mathcal{H}_{\mathrm{A}})$. 
Our operations $\stackrel{A}{\rightarrow}$ are the subset of the operations $\operatorname{tr_{\mathrm{A}}} ( U_{\mathrm{SA}} ( (\cdot) \otimes \chi ) U_{\mathrm{SA}}^{\dagger})$ acting on a state $\rho$,
for which the partial trace removes the reduced state $\chi$, 
i.e., $\operatorname{tr}_{\mathrm{S}} \left(U_{\mathrm{SA}} \left( \frac{\mathbbm{1}}{\dim(\mathcal{H}_{\mathrm{S}})} \otimes \chi \right) U_{\mathrm{SA}}^{\dagger} \right)=\chi$; $U_{\mathrm{SA}}$ denotes an arbitrary unitary.

Consider the function $h(\rho)= -\operatorname{tr}(\rho \log \rho)$.\footnote{
Note that even though this function corresponds to the von Neumann entropy, 
we regard it as a purely mathematical function with the mathematical properties that it is subadditive, 
unitarily invariant, additive for product states, 
and it reaches its maximum of $\log \left( \dim(\mathcal{H})\right)$ iff $\rho$ is maximally mixed. 
The reason why we consider this quantity is that we know rather well how it behaves; 
however, any other mathematical function that satisfies these properties could have been used instead.
}
For the maximally mixed state $\rho=\frac{\mathbbm{1}}{\dim(\mathcal{H}_{\mathrm{S}})}$ the following holds,
\begin{align} \begin{split}
h\left(\rho \otimes \chi\right)&=h\left(\frac{\mathbbm{1}}{\dim(\mathcal{H}_{\mathrm{S}})} \otimes \chi \right)\\
&=h\left(U_{\mathrm{SA}} \left( \frac{\mathbbm{1}}{\dim(\mathcal{H}_{\mathrm{S}})} \otimes \chi \right) U_{\mathrm{SA}}^{\dagger}\right)\\
&\leq h\left(\operatorname{tr}_{\mathrm{A}} \left(U_{\mathrm{SA}} \left( \frac{\mathbbm{1}}{\dim(\mathcal{H}_{\mathrm{S}})} \otimes \chi \right) U_{\mathrm{SA}}^{\dagger} \right)\right)\\
&\quad + h\left(\operatorname{tr}_{\mathrm{S}} \left(U_{\mathrm{SA}} \left( \frac{\mathbbm{1}}{\dim(\mathcal{H}_{\mathrm{S}})} \otimes \chi \right) U_{\mathrm{SA}}^{\dagger} \right)\right).\\
\end{split} \label{eq:h} \end{align}
The inequality follows by subadditivity of $h$.
Furthermore, we know that
\begin{equation*} 
h\left(\frac{\mathbbm{1}}{\dim(\mathcal{H}_{\mathrm{S}})} \otimes \chi\right)=\log( \dim(\mathcal{H}_{\mathrm{S}}) \cdot d).
\end{equation*}
Since the ancillary system has to be in state $\chi$ at the end of the process,
\begin{equation*} h\left(\operatorname{tr}_{\mathrm{S}} \left(U_{\mathrm{SA}} \left( \frac{\mathbbm{1}}{\dim(\mathcal{H}_{\mathrm{S}})} \otimes \chi \right) U_{\mathrm{SA}}^{\dagger} \right)\right)=\log d.
\end{equation*}
From \eqref{eq:h} we conclude that
\begin{equation*}
h\left(\operatorname{tr}_{\mathrm{A}} \left(U_{\mathrm{SA}} \left( \frac{\mathbbm{1}}{\dim(\mathcal{H}_{\mathrm{S}})} \otimes \chi \right) U_{\mathrm{SA}}^{\dagger} \right)\right) \geq \log \dim(\mathcal{H}_{\mathrm{S}}).
\end{equation*}
Note that $\operatorname{tr}_{\mathrm{A}} \left(U_{\mathrm{SA}} \left( \frac{\mathbbm{1}}{\dim(\mathcal{H}_{\mathrm{S}})} \otimes \chi \right) U_{\mathrm{SA}}^{\dagger} \right) \in \mathcal{S}(\mathcal{H})$, as the partial trace itself is a trace preserving and completely positive map.
Therefore, the above inequality can only be satisfied if $\operatorname{tr}_{\mathrm{A}} \left(U_{\mathrm{SA}} \left( \frac{\mathbbm{1}}{\dim(\mathcal{H}_{\mathrm{S}})} \otimes \chi \right) U_{\mathrm{SA}}^{\dagger} \right)=\frac{\mathbbm{1}}{\dim(\mathcal{H}_{\mathrm{S}})}$, as otherwise its entropy is smaller. The processes $\stackrel{A}{\rightarrow}$ are thus unital.
\end{proof}

The majorization relation imposes a partial ordering on the set of all quantum states, as required by Lieb and Yngvason's framework. Furthermore, not all quantum states can be compared by means of the majorization relation, thus the latter emerges paired with a notion of comparability that expresses whether two states can be ordered by means of $\prec_{\mathrm{M}}$.

Recall that for two states $\rho$, $\tilde{\rho} \in \mathcal{S}(\mathcal{H})$ the majorization condition $\rho \prec_{\mathrm{M}} \tilde{\rho}$
can be equivalently expressed in terms of the spectral step functions $f_{\rho}$ and $f_{\tilde{\rho}}$ according to \eqref{eq:majorfunct}.
Even though the functions $f_{\rho}$ are not in one-to-one correspondence with the states $\rho \in \mathcal{S}(\mathcal{H})$ (they rather represent all states with a certain spectrum) this description is sufficient for this purpose.\footnote{As the processes $\stackrel{A}{\rightarrow}$ include the application of an arbitrary unitary, states with the same spectra can be inter converted by means of an allowed operation.}

For an equilibrium state $\rho$, which, by definition, has a flat spectrum, the step function $f_{\mathrm{\rho}}$ is conveniently written in terms of its rank as
\begin{equation*} 
f_{\mathrm{\rho}}(x) =
	\begin{cases} 
	\frac{1}{\operatorname{r}(\rho)}, &  0 \leq x \leq \operatorname{r}(\rho),\\
	 0, & \textrm{otherwise},
\end{cases}
\end{equation*}
with  $\operatorname{r}(\rho):= \operatorname{rank}\rho$. This notation for the $\operatorname{rank}$ is used throughout this section.
For two equilibrium states $\rho$ and $\tilde{\rho}$ the rank alone determines 
which one majorizes the other, as in that case the relation $\rho \prec_{\mathrm{M}} \tilde{\rho}$ is equivalent to
\begin{equation*}
\int_{0}^{k} \frac{1}{\operatorname{r}(\rho)} dx \ \geq 
\ \int_{0}^{k} \frac{1}{\operatorname{r}(\tilde{\rho})} dx, \quad \forall \ k \in \mathbbm{R}_{\mathrm{\geq 0}}, 
\end{equation*}
and thus to $\operatorname{r}(\rho) \leq \operatorname{r}(\tilde{\rho})$.

We define the composition of two states $\rho \in \mathcal{S}(\mathcal{H})$ and $\tilde{\rho} \in \mathcal{S}(\mathcal{\tilde{H}})$ as their tensor product $\rho \otimes \tilde{\rho} \in \mathcal{S}(\mathcal{H} \otimes \mathcal{\tilde{H}})$.
This means that two systems are considered uncorrelated before any interaction takes place. In an interaction correlations between subsystems may be established. Similarly, the ancillary system is assumed to be initially uncorrelated with the system, which can be ensured by choosing ancillas that have not interacted with the system before or that have been decoupled from the system through other interactions. The same view is taken in~\cite{Horodecki2013_ThermoMaj,Aberg2013_worklike}.
The scaling of an equilibrium state $\rho$ is assumed to coincide with its composition for scaling factors $\lambda \in \mathbbm{N}$.
The step function of a scaled state $\lambda \rho \in \lambda \mathcal{S}(\mathcal{H}$), 
which is defined as $\rho^{\otimes \lambda} \in \mathcal{S}(\mathcal{H}^{\otimes \lambda})$, is
\begin{align*}
  \begin{split}
    f_{ \lambda \rho}(x) &= 
    \begin{cases}
      \left(\frac{1}{ \operatorname{r}(\rho)}\right)^{\lambda}, &  0 \leq x \leq \operatorname{r}(\rho)^{\lambda},\\
      0, &  \textrm{otherwise},
    \end{cases} \\
    &= f_{\rho}(x^{\frac{1}{\lambda}})^{\lambda}.
  \end{split}
\end{align*}
Note that correlations within the system do not affect this scaling operation: From the perspective of an outside observer with the ability to apply adiabatic processes to the system, its internal properties (such as the entanglement of individual particles, for instance) are not accessible. Hence, they may not be incorporated into the observer's description of the system and our considerations escape the potential issues
with entanglement in microscopic systems that were raised in~\cite{Lieb2014}.

To obtain a continuous scaling operation, this way of scaling the step function is applied for any $\lambda \in \mathbbm{R}_{> 0}$. 
For most values of $\lambda$ the scaled copies $\lambda \rho$ do not represent physical states and no actual space 
$\lambda \mathcal{S}(\mathcal{H})$ exists.
However, any normalized, but possibly unphysical, function $f(x)$, can be turned into
a physical state by actually considering the function $\frac1n f(x/n)$ for a large enough
$n$: this function now effectively describes the state of the system along with an ancilla of dimension $n$ in a fully mixed state.
Indeed, we can always combine states with a fully mixed state of a given rank, and
 the following rules apply:
\begin{subequations}
  \begin{align*}
    \lambda\left(\rho,\frac{\mathbbm{1}_n}{n}\right) &\sim
    \left(\lambda\rho,\frac{\mathbbm{1}_{n^\lambda}}{n^\lambda}\right) \ ;
    \\
    \left(\left(\rho,\frac{\mathbbm{1}_n}{n}\right),
      \left(\tilde{\rho},\frac{\mathbbm{1}_m}{m}\right)\right)
    &\sim \left(\left(\rho, \tilde{\rho}\right) , \frac{\mathbbm{1}_{n\cdot m}}{n\cdot m}\right)\ .
  \end{align*}
\end{subequations}
These rules are easily verified by representing the states in terms of their step functions. 
They give physical significance to any relative statements involving states that
would be required to be scaled in an unphysical way: for example,
$(\lambda\rho,\mu\tilde{\rho})\prec\sigma$ if and only if
$\left(\lambda\left(\rho,\frac{\mathbbm{1}_n}{n}\right),
  \mu\left(\tilde{\rho},\frac{\mathbbm{1}_m}{m}\right)\right) \prec
\left(\sigma,\left(\frac{\mathbbm{1}_{n^\lambda m^\mu}}{n^\lambda m^\mu}\right)\right)$.
Thus, if $\lambda\rho$ does not actually correspond to a physical state, then the second
expression should in fact be considered; indeed for large enough $n$ the state
$\lambda\left(\rho,\frac{\mathbbm{1}_n}{n}\right)$ is physical to a good approximation.

The entropy function~\eqref{eq:entropyoriginal1} can thus be equivalently rewritten as
\begin{widetext}
\begin{equation} \label{eq:entropyscaled} 
S\left(\rho\right) 
= \sup\left\{ \lambda : \left( \left(1-\lambda\right)
\left(\rho_0,~\frac{\mathbbm{1}_n}{n}\right),~\lambda
\left(\rho_1,~\frac{\mathbbm{1}_n}{n}\right)\right) \prec \left(\rho,~\frac{\mathbbm{1}_{n}}{n}\right) \right\}\ ,
\end{equation}
\end{widetext}
where the last expression (to a good approximation) only involves physical states for a large enough $n$. 

\subsection{Proof of Proposition~1 from the main text} \label{sec:technical}
In order to derive entropy functions for our microscopic model, we first have to show that it obeys Lieb and Yngvason's axioms.
\begin{lemma} \label{lemma:major}
Consider the majorization relation $\prec_{\mathrm{M}}$
and the composition and scaling operations defined above.
Then, for equilibrium states the six axioms E1 to E6 as well as the Comparison Hypothesis hold. For non-equilibrium states axioms N1 and N2 hold.
\end{lemma}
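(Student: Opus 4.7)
The plan is to verify each axiom in turn using the step-function characterization of majorization from equation \eqref{eq:majorfunct} and the composition/scaling rules set up in Section~\ref{sec:model}. Most axioms follow essentially for free from well-known algebraic properties of majorization; the care is needed mainly for the scaling-based axioms E4, E5, stability E6, and the Comparison Hypothesis on equilibrium states.

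First I would dispose of E1, E2, E3 and E6. Reflexivity and transitivity of $\prec_{\mathrm M}$ are immediate from the integral inequalities in \eqref{eq:majorfunct}. For consistent composition (E3), I would observe that the step function of a tensor product $\rho\otimes\rho'$ is obtained from $f_{\rho}$ and $f_{\rho'}$ in a way that preserves the pointwise domination of integrals; this is the standard fact that majorization is preserved under tensor products and is easily checked at the level of step functions. For stability (E6), I would use that if $f_{(X,\varepsilon Z_0)}$ dominates $f_{(Y,\varepsilon Z_1)}$ in the integral sense for a sequence $\varepsilon\to 0$, then in the limit the contribution of $\varepsilon Z_i$ vanishes uniformly on any bounded interval, so $\int_0^k f_X\,\mathrm dx \geq \int_0^k f_Y\,\mathrm dx$ for every $k$, i.e.\ $X\prec_{\mathrm M} Y$. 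Because $\prec_{\mathrm M}$ is defined on all of $\mathcal S(\mathcal H)$, these arguments simultaneously establish N2.

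Next I would handle E4 and E5. Scaling was defined so that the step function of $\lambda\rho$ is $f_\rho(x^{1/\lambda})^\lambda$ (extended via the $\mathbbm{1}_n/n$-padding trick when $\lambda\rho$ is not literally a physical state). The map $f\mapsto f(x^{1/\lambda})^\lambda$ is monotone with respect to pointwise integral domination, which yields E4 directly. For E5, I would check on step functions that $f_{(\lambda\rho,(1-\lambda)\rho)}$ and $f_{\rho}$ give the same cumulative integrals, so that $\rho \sim_{\mathrm M} (\lambda\rho,(1-\lambda)\rho)$; this is essentially the identity $r(\rho)^\lambda\cdot r(\rho)^{1-\lambda}=r(\rho)$ together with the matching heights $((1/r(\rho))^\lambda\cdot(1/r(\rho))^{1-\lambda}=1/r(\rho))$.

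The main conceptual step is the Comparison Hypothesis for equilibrium states. A state of the form $((1-\lambda)\rho_a,\lambda\rho_b)$ with $\rho_a,\rho_b$ equilibrium has, by the scaling and composition rules, a step function that is constant on an interval of length $r(\rho_a)^{1-\lambda}r(\rho_b)^{\lambda}$ at height $r(\rho_a)^{-(1-\lambda)}r(\rho_b)^{-\lambda}$; in other words, such composites look like effective flat-spectrum states of rank $r(\rho_a)^{1-\lambda}r(\rho_b)^{\lambda}$. Two flat-spectrum states are always comparable by $\prec_{\mathrm M}$ via their ranks, so any two composites of this form are comparable, establishing the Comparison Hypothesis. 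Finally, for N1 on $\Gamma_{\mathrm{ext}}$, I would sandwich an arbitrary $\rho\in\mathcal S(\mathcal H)$ between a pure equilibrium state $X_0=\lvert\psi\rangle\langle\psi\rvert$ (which majorizes everything, hence $X_0\prec_{\mathrm M}\rho$) and the maximally mixed state $X_1=\mathbbm{1}_d/d$ (which is majorized by everything, hence $\rho\prec_{\mathrm M}X_1$), both of which have flat spectra and thus lie in $\Gamma$.

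The step I expect to demand the most care is the consistent extension of scaling to arbitrary $\lambda\in\mathbbm R_{\geq 0}$: one has to justify that the formal step-function manipulations above really correspond to meaningful physical statements, which is the role of the $\mathbbm{1}_n/n$-padding identities displayed just before \eqref{eq:entropyscaled}. Once those identities are used to reinterpret every scaled expression as a physical statement, E4 and E5 become unambiguous and the rest of the argument proceeds uniformly at the level of step functions.
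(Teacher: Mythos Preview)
Your proposal is correct and follows the same route as the paper: E1--E3 and E6 are verified for arbitrary states via the integral form of majorization (the paper spells out E3 with an explicit partial-sum argument rather than invoking tensor-product majorization as a black box, and uses dominated convergence for E6), while E4, E5, and the Comparison Hypothesis are reduced to rank comparisons on equilibrium states, and N1 is the pure/maximally-mixed sandwich. One caution on your E4 phrasing: the map $f\mapsto f(x^{1/\lambda})^\lambda$ is not even normalization-preserving for non-flat $f$, so the monotonicity claim should be stated for equilibrium states only---which is all the axiom requires and is exactly what the paper does via $\operatorname{r}(\rho)\leq \operatorname{r}(\sigma)\Leftrightarrow \operatorname{r}(\rho)^\lambda\leq \operatorname{r}(\sigma)^\lambda$.
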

\begin{proof}
As axiom N2 requires that axioms E1 to E3 as well as E6 hold for non-equilibrium states, 
we directly verify them for arbitrary states.
\newline
\textit{Reflexivity (E1):} $\prec_{\mathrm{M}}$ is clearly reflexive: For $\rho \in \mathcal{S}(\mathcal{H})$ and for all $k \in \mathbbm{R}_{\geq 0}$,
\begin{equation*} 
\int^{k}_{0} f_{\rho}(x) dx = \int^{k}_{0} f_{\rho}(x) dx, 
\end{equation*}
and thus $\rho \sim_{\mathrm{M}} \rho$.
\newline
\textit{Transitivity (E2):} Let $\rho$, $\sigma$, and $\chi \in \mathcal{S}(\mathcal{H})$. Then, $\rho \prec_{\mathrm{M}} \sigma$ and $\sigma \prec_{\mathrm{M}} \chi$ is equivalent to
\begin{equation*} 
\int^{k}_{0} f_{\rho}(x) dx \geq \int^{k}_{0} f_{\sigma}(x) dx \geq \int^{k}_{0} f_{\chi}(x) dx, \quad \forall \ k \in \mathbbm{R}_{\geq 0}.
\end{equation*}
Thus, $\rho \prec_{\mathrm{M}} \chi$.
\newline
\textit{Consistent composition (E3):} Let $\rho \prec_{\mathrm{M}} \sigma \in \mathcal{S}(\mathcal{H})$ with $d=\dim(\mathcal{H})$ 
and with eigenvalues
$p_{1} \geq p_{2} \geq \ldots \geq p_{d}$ and $q_{1} \geq q_{2} \geq \ldots \geq q_{d}$
and let $\rho' \prec_{\mathrm{M}} \sigma' \in \mathcal{S}(\mathcal{H'})$ with $d'=\dim(\mathcal{H'})$ and with eigenvalues
$p'_{1} \geq p'_{2} \geq \ldots \geq p'_{d'}$ and $q'_{1} \geq q'_{2} \geq \ldots \geq q'_{d'}$. 
Note that the composed state $(\rho,~\rho')$ with eigenvalues $p''_{l}=p_{i} p'_{j}$ ordered according to
$p''_{1} \geq p''_{2} \geq \ldots \geq p''_{d d'}$ has an associated step function
\begin{equation*}
f_{\mathrm{(\rho,~\rho')}}(x)= 
\begin{cases} 
p''_{l}, &  l-1 \leq x \leq l,\\
0, &  \textrm{otherwise}, 
\end{cases},
\end{equation*}
for $l=1,~2,~\ldots~,~d d'$.
Then for any $0 \leq k \leq d d'$ we can choose $m_{1} \geq m_{2} \geq \ldots \geq m_{d'} \geq 0$ with $m_{1}+m_{2}+ \cdots + m_{d'} =k$ such that
\begin{equation*}
\sum_{l=0}^{k} p''_{l} 
= p_1 \sum_{j=0}^{m_{1}} p'_{j} + p_2 \sum_{j=0}^{m_{2}} p'_{j} + \cdots + p_d \sum_{j=0}^{m_{d'}} p'_{j}. 
\end{equation*}
This is possible since for any $i$ and $j$ the inequality $p_i p'_j \geq p_i p'_{j+k }$ holds for all $k \geq 0$.
Now as $\rho' \prec_{\mathrm{M}} \sigma'$ the inequality
\begin{equation*}
\sum_{j=0}^{m_{i}} p'_{j} \leq \sum_{j=0}^{m_{i}} q'_{j},
\end{equation*}
holds for any $0 \leq i \leq d'$. Therefore,
\begin{multline}
p_1 \sum_{j=0}^{m_{1}} p'_{j} + p_2 \sum_{j=0}^{m_{2}} p'_{j} + \cdots + p_d \sum_{j=0}^{m_{d'}} p'_{j} \\
\leq
 p_1 \sum_{j=0}^{m_{1}} q'_{j} + p_2 \sum_{j=0}^{m_{2}} q'_{j} + \cdots + p_d \sum_{j=0}^{m_{d'}} q'_{j}. \label{eq:inequ}
\end{multline}
Now the terms in~\eqref{eq:inequ} can be regrouped with
$n_{1} \geq n_{2} \geq \ldots \geq n_{d} \geq 0$ obeying $n_{1}+n_{2}+ \cdots + n_{d} =k$ as 
\begin{multline*}
q'_1 \sum_{j=0}^{n_{1}} p_{j} + q'_2 \sum_{j=0}^{n_{2}} p_{j} + \cdots + q'_{d'} \sum_{j=0}^{n_{d'}} p_{j} \\
\leq q'_1 \sum_{j=0}^{n_{1}} q_{j} + q'_2 \sum_{j=0}^{n_{2}} q_{j} + \cdots + q'_{d'} \sum_{j=0}^{n_{d'}} q_{j}.
\end{multline*}
The inequality holds as $\rho \prec_{\mathrm{M}} \sigma$ implies that for any $0 \leq i \leq d$,
\begin{equation*}
\sum_{j=0}^{m_{i}} p_{j} \leq \sum_{j=0}^{m_{i}} q_{j}.
\end{equation*}
Furthermore,
\begin{equation*}
q'_1 \sum_{j=0}^{n_{1}} q_{j} + q'_2 \sum_{j=0}^{n_{2}} q_{j} + \cdots + q'_{d'} \sum_{j=0}^{n_{d'}} q_{j} \leq \sum_{l=0}^{k} q''_l.
\end{equation*}
as the terms in the last sum might be ordered differently. This concludes the proof that $(\rho,~\rho') \prec_{\mathrm{M}} (\sigma,~\sigma')$.
\newline
\textit{Stability (E6):} Note that $\chi_{\mathrm{0}}$ and $\chi_{\mathrm{1}}$ are equilibrium states; 
only for those the scaling operation is defined.
Let $\rho \in \mathcal{S}(\mathcal{H})$ have eigenvalues $p_{1} \geq p_{2} \geq \ldots \geq p_{d}$ with $d=\dim(\mathcal{H})$.
Assume that $ (\rho,~\varepsilon \chi_{\mathrm{0}} ) \prec_{\mathrm{M}} (\sigma,~\varepsilon \chi_{\mathrm{1}})$
for a sequence of $\varepsilon$'s tending to zero. Let this sequence be denoted by $(\varepsilon_{i})_{i}$.
Then for all $\varepsilon_{i}$ we have
\begin{equation*} 
\int_{0}^{k} f_{(\rho, \varepsilon_{i} \chi_{\mathrm{0}})}(x) dx \ \geq \ \int_{0}^{k} f_{\mathrm{(\sigma, \varepsilon_{i} \chi_{\mathrm{1}})}}(x) dx, \quad \forall \ k \in \mathbbm{R}_{\mathrm{\geq 0}}. 
\end{equation*}
Taking the limit $i \rightarrow \infty$ leads to
\begin{equation*}
  \lim_{i\rightarrow \infty} \int_{0}^{k} f_{\mathrm{(\rho,\varepsilon_{i}
      \chi_{0})}}(x) dx \ \geq \ \lim_{i\rightarrow \infty}  \int_{0}^{k}
  f_{\mathrm{(\sigma,\varepsilon_{i}\chi_{1})}}(x) dx,
\end{equation*}
$\forall \ k \in \mathbbm{R}_{\mathrm{\geq 0}}$ and thus
\begin{equation*}
\int_{0}^{k} \lim_{i\rightarrow \infty} f_{\mathrm{(\rho,\varepsilon_{i}\chi_{0})}}(x) dx
\geq  \int_{0}^{k} \lim_{i\rightarrow \infty}
f_{\mathrm{(\sigma,\varepsilon_{i}\chi_{1})}}(x) dx,
\end{equation*}
$\forall \ k \in \mathbbm{R}_{\mathrm{\geq 0}}$, which follows by dominated convergence.
Note that 
\begin{equation*}
f_{\mathrm{(\rho,\varepsilon_{i} \chi_{\mathrm{0}})}}(x)= 
\begin{cases} 
p_{l} \frac{1}{\operatorname{r}(\chi_{0})^{\varepsilon_{i}}},  &  (l-1)\operatorname{r}(\chi_{0})^{\varepsilon_{i}}  \leq x \leq l \operatorname{r}(\chi_{0})^{\varepsilon_{i}},\\
0, &  \textrm{otherwise}, 
\end{cases}
\end{equation*}
and similarly for $f_{\mathrm{(\sigma,\varepsilon_{i} \chi_{\mathrm{1}})}}$.
Thus, taking the limit leads to 
\begin{equation*} 
\int_{0}^{k}f_{\mathrm{\rho}}(x) dx \ \geq \ \int_{0}^{k} f_{\mathrm{\sigma}}(x) dx, \quad \forall \ k \in \mathbbm{R}_{\mathrm{\geq 0}},
\end{equation*} 
which is equivalent to $\rho \prec_{\mathrm{M}} \sigma$. 

Since for equilibrium states the rank suffices to assert majorization, E4 and E5 as well as the Comparison Hypothesis can be efficiently checked:
\newline
\textit{Scaling invariance (E4):} We see that
\begin{align*}\begin{split}
\rho \prec_{\mathrm{M}} \sigma \ &\Leftrightarrow \  \operatorname{r}(\sigma) \ \geq \operatorname{r}(\rho) \\
&\Leftrightarrow \ \operatorname{r}(\sigma)^{\lambda} \ \geq \ \operatorname{r}(\rho)^{\lambda}, \quad \forall \ \lambda \ > \ 0 \\
&\Leftrightarrow \ \lambda \rho \prec_{\mathrm{M}} \lambda \sigma, \quad \forall \ \lambda \ > \ 0.
\end{split} 
\end{align*}
\newline
\textit{Splitting and recombination (E5):} Let $0 < \lambda < 1$. Then,
\begin{align*}
\begin{split}
f_{\mathrm{(\lambda \rho, (1-\lambda) \rho)}}(x)&=\begin{cases} 
\frac{1}{\operatorname{r}(\rho)^{\lambda}} \frac{1}{\operatorname{r}(\rho)^{1-\lambda}}, & 0 \leq x \leq \operatorname{r}(\rho)^{\lambda}\operatorname{r}(\rho)^{1-\lambda},\\
	 0, & \textrm{otherwise},
	 \end{cases}\\
&=\begin{cases} 
\frac{1}{\operatorname{r}(\rho)}, & 0 \leq x \leq \operatorname{r}(\rho),\\
	 0, & \textrm{otherwise},
	 \end{cases}\\
&=f_{\mathrm{\rho}}(x).
\end{split}
\end{align*}
\newline
\textit{Comparison Hypothesis:} 
Let $0 \leq \lambda \leq 1$ and $0 \leq \mu \leq 1$ and let $\rho$, $\tilde{\rho}$, $\sigma$ $\tilde{\sigma} \in \mathcal{S}(\mathcal{H})$ 
be arbitrary equilibrium states.
Then the state $(\lambda \rho,  (1-\lambda) \sigma)$ 
with step function
\begin{equation*} 
f_{\mathrm{ (\lambda \rho,(1-\lambda)\sigma)}}(x) = \begin{cases}
 \frac{1}{\operatorname{r}(\rho)^{\lambda}} 
 \frac{1}{\operatorname{r}(\sigma)^{1-\lambda}} , &  0 \leq x \leq \operatorname{r}(\rho)^{\lambda} \operatorname{r}(\sigma)^{1-\lambda},\\
	 0, &  \textrm{otherwise},
\end{cases}
\end{equation*}
can be related to any $(\mu \tilde{\rho},  (1-\mu) \tilde{\sigma})$ 
in the sense of~\eqref{eq:majorfunct}. 
\newline
\textit{Axiom (N1):} Choose $\rho_{\mathrm{0}}$ to be a pure state and $\rho_{\mathrm{1}} = \frac{\mathbbm{1}}{\dim(\mathcal{H})}$ maximally mixed. 
For any state $\rho \in \mathcal{S}(\mathcal{H})$ we know that $\rho_{\mathrm{0}} \prec_{\mathrm{M}} \rho$ and $\rho \prec_{\mathrm{M}} \rho_{\mathrm{1}}$.
\end{proof}

\begin{cor} \label{cor:result}
For a system on which processes $\stackrel{A}{\rightarrow}$ can be performed, 
there exists a unique entropy function $S$ for equilibrium states, 
as well as bounds $S_{\mathrm{-}}$ and $S_{\mathrm{+}}$ on the entropy of non-equilibrium states,
all up to an affine change of scale.
\end{cor}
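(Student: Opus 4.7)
The plan is to obtain Corollary~\ref{cor:result} as a direct transfer theorem, applying Lieb and Yngvason's general results (Theorem~\ref{thm:entropythm} and Proposition~\ref{prop:nonequ}) to the microscopic resource-theoretic setting. The preceding lemma has already identified the physical processes $\stackrel{A}{\rightarrow}$ with the majorization preorder $\prec_{\mathrm{M}}$ on $\mathcal{S}(\mathcal{H})$, and the composition and scaling operations have been fixed in Section~\ref{sec:model}. Thus, what must be done is to verify that the abstract hypotheses of Lieb and Yngvason's theorems are met, and then simply quote their conclusions.

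First, I would invoke Lemma~\ref{lemma:major}, which establishes that on the subset $\Gamma$ of flat-spectrum states the relation $\prec_{\mathrm{M}}$ together with tensor-product composition and the scaling rule defined via step functions satisfies axioms E1 through E6 and the Comparison Hypothesis, while on the full extended state space $\Gamma_{\mathrm{ext}} = \mathcal{S}(\mathcal{H})$ the axioms N1 and N2 hold. These are precisely the hypotheses required to invoke the Lieb--Yngvason framework in the microscopic model.

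Second, applying Theorem~\ref{thm:entropythm} to $\Gamma$ yields a function $S$ on the equilibrium states that is additive under composition, extensive under scaling, and monotonic with respect to $\prec_{\mathrm{M}}$. The same theorem guarantees that $S$ is unique up to an affine change of scale $C_1 \cdot S + C_0$ with $C_1>0$, and that it admits the explicit representation~\eqref{eq:entropyoriginal1}, once a gauge $\rho_0 \prec\prec \rho_1$ in $\Gamma$ is fixed (for which Lemma~\ref{lemma:major} already supplies candidates via the pure and maximally mixed states used in the proof of N1). Applying Proposition~\ref{prop:nonequ} to $\Gamma_{\mathrm{ext}}$ then delivers the bounding functions $S_-$ and $S_+$ from equations~\eqref{eq:lymin} and~\eqref{eq:lymax}, which sandwich every monotonic extension of $S$ to the non-equilibrium states.

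There is in fact no substantive obstacle in the proof of the corollary itself: once the identification of $\stackrel{A}{\rightarrow}$ with $\prec_{\mathrm{M}}$ and the axiom verification in Lemma~\ref{lemma:major} are in hand, the corollary is a one-line application of Theorem~\ref{thm:entropythm} and Proposition~\ref{prop:nonequ}. The only subtlety worth flagging explicitly is that ``uniqueness up to an affine change of scale'' applies separately to $S$ on $\Gamma$ and to the pair $(S_-, S_+)$ on $\Gamma_{\mathrm{ext}}$, with the gauge on the latter inherited from the reference pair $\rho_0, \rho_1 \in \Gamma$ chosen for $S$; this is automatic from the formulas, but worth stating for clarity. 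Accordingly, the proof is essentially the single sentence ``Combine Lemma~\ref{lemma:major} with Theorem~\ref{thm:entropythm} and Proposition~\ref{prop:nonequ}''.
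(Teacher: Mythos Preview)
Your proposal is correct and mirrors the paper's own proof almost verbatim: the paper simply states that Lemma~\ref{lemma:major} allows the application of Theorem~\ref{thm:entropythm} and Proposition~\ref{prop:nonequ} to quantum states compared by $\prec_{\mathrm{M}}$, which implies the corollary. Your elaboration on the gauge choice and the inheritance of the affine freedom is accurate but goes slightly beyond what the paper records.
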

\begin{proof}
Lemma~\ref{lemma:major} allows the application of Lieb and Yngvason's Theorem~\ref{thm:entropythm} and Proposition~\ref{prop:nonequ}
to quantum states compared by means of $\prec_{\mathrm{M}}$. This implies the corollary.
\end{proof}

We are now in a position to prove Proposition~1 from the main text.

\begin{proof}[Proof of Proposition~1 from the main text]
For any state $\rho \in \mathcal{S}(\mathcal{H})$ with eigenvalues $p_{1} \geq p_{2} \geq \ldots \geq p_{d}$, where $d=\dim \mathcal{H}$, and for any
equilibrium states $\rho_{0} \prec\prec_{\mathrm{M}} \rho_{1}$, let $\lambda$ be such that $((1-\lambda)\rho_{\mathrm{0}},\lambda \rho_{\mathrm{1}}) \prec_{\mathrm{M}} \rho$. Then,
\begin{equation} 
\int_{0}^{k} f_{((1-\lambda) \rho_{\mathrm{0}}, \lambda\rho_{\mathrm{1}})}(x)dx 
\geq \int_{0}^{k} f_{\mathrm{\rho}}(x)dx, \quad \forall \ k \in \mathbbm{R}_{\mathrm{\geq 0}}.
\label{eq:58}
\end{equation}
Let $\tilde{k}=\operatorname{r}(\rho_{\mathrm{0}})^{\mathrm{1-\lambda}}\operatorname{r}(\rho_{\mathrm{1}})^{\mathrm{\lambda}}$.
As $\rho_{\mathrm{0}}$ and $\rho_{\mathrm{1}}$ are equilibrium states, we know that for 
$0 \leq k \leq \tilde{k}$,
\begin{equation*} 
\int_{0}^{k} f_{((1-\lambda)\rho_{\mathrm{0}}, \lambda\rho_{\mathrm{1}})}(x)dx \\= \int_{0}^{k} \left(\frac{1}{\operatorname{r}(\rho_{\mathrm{0}})} \right)^{\mathrm{1-\lambda}} 
\left(\frac{1}{\operatorname{r}(\rho_{\mathrm{1}})}  \right)^{\mathrm{\lambda}} dx.
\end{equation*}
Thus, by considering $k \leq 1$, Equation \eqref{eq:58} directly implies 
\begin{equation}
\left(\frac{1}{\operatorname{r}(\rho_{\mathrm{0}})}  \right)^{\mathrm{1-\lambda}} \left(\frac{1}{\operatorname{r}(\rho_{\mathrm{1}})}  \right)^{\mathrm{\lambda}} \geq p_{1}
,\label{eq:p1cond} 
\end{equation}
which can be rewritten as
\begin{equation}
C_1 \cdot \log \frac{1}{p_{1}}+ C_0 \geq \lambda,\label{eq:abform}
\end{equation}
with $C_1=\frac{1}{\log \frac{\operatorname{r}(\rho_{\mathrm{1}})}{\operatorname{r}(\rho_{\mathrm{0}})}}$ and 
$C_0=-\frac{1}{\log \frac{\operatorname{r}(\rho_{\mathrm{1}})}{\operatorname{r}(\rho_{\mathrm{0}})}} \log \operatorname{r}(\rho_{\mathrm{0}})$. 
On the other hand, Equation \eqref{eq:p1cond} implies
\begin{align*} 
\left(\frac{1}{\operatorname{r}(\rho_{\mathrm{0}})}
  \right)^{\mathrm{1-\lambda}} \left(\frac{1}{\operatorname{r}(\rho_{\mathrm{1}})}
  \right)^{\mathrm{\lambda}} \min \left\{k, \tilde{k} \right\} 
  &\geq p_{1} \cdot \min \left\{k, \frac{1}{p_{1}}\right\} \\
  &\geq \int_{0}^{k} f_{\mathrm{\rho}}(x)dx, 
\end{align*}
for all $k \in \mathbbm{R}_{\mathrm{\geq 0}}$, i.e., implies \eqref{eq:58}; 
the second inequality follows as the step function $f_{\rho}(x)$ is monotonously decreasing and normalised.
Thus, taking the supremum over $\lambda$ in \eqref{eq:abform} concludes the proof for 
$\tilde{S}_{\mathrm{-}}$, as $H_{\mathrm{min}}(\rho)= - \log \|\rho \|_{\infty}$, 
where $\|\rho\|_{\infty}$ denotes the maximal eigenvalue of the state $\rho$, i.e., equals $p_{1}$. 

For $\tilde{S}_{\mathrm{+}}$ we proceed similarly. 
Let $\rho \in \mathcal{S}(\mathcal{H})$ 
and let $\rho_{0} \prec\prec_{\mathrm{M}} \rho_{1}$ be two
equilibrium states, as above. 
Now let $\lambda$ be such that $\rho \prec_{\mathrm{M}} ((1-\lambda)\rho_{\mathrm{0}},\lambda \rho_{\mathrm{1}})$ and thus
\begin{equation} 
\int_{0}^{k} f_{((1-\lambda)\rho_{\mathrm{0}}, \lambda\rho_{\mathrm{1}})}(x)dx
\leq  \int_{0}^{k} f_{\mathrm{\rho}}(x)dx,  \quad \forall \ k \in \mathbbm{R}_{\mathrm{\geq 0}}. 
\label{eq:74}
\end{equation}
First, we show that 
\begin{equation}
\operatorname{r}(\rho_{\mathrm{0}})^{\mathrm{1-\lambda}} \operatorname{r}(\rho_{\mathrm{1}})^{\lambda} \geq \operatorname{r}(\rho) \label{eq:suff}
\end{equation}
by contradiction: Assume for now that $\operatorname{r}(\rho_{\mathrm{0}})^{\mathrm{1-\lambda}} \operatorname{r}(\rho_{\mathrm{1}})^{\lambda} < \operatorname{r}(\rho)$.
For $\tilde{k} = \operatorname{r}(\rho_{\mathrm{0}})^{\mathrm{1-\lambda}} \operatorname{r}(\rho_{\mathrm{1}})^{\lambda}$,
\begin{equation*} 
\int_{0}^{\tilde{k}} f_{\mathrm{\rho}}(x)dx < 1.
\end{equation*} 
This contradicts~\eqref{eq:74}, as 
\begin{equation*}
\int_{0}^{\tilde{k}} f_{((1-\lambda)\rho_{\mathrm{0}}, \lambda\rho_{\mathrm{1}})}(x)dx = 1.
\end{equation*}
Thus we have $\tilde{k} \geq \operatorname{r}(\rho)$, which can be rewritten as 
\begin{equation} \lambda \geq C_1 \cdot \log \operatorname{r}(\rho) + C_0 \label{eq:abcond2}\end{equation}
with $C_1$ and $C_0$ defined as above.
Moreover, \eqref{eq:suff} implies 
\begin{align*}
\left(\frac{1}{\operatorname{r}(\rho_{\mathrm{0}})} \right)^{\mathrm{1-\lambda}} \left(\frac{1}{\operatorname{r}(\rho_{\mathrm{1}})} \right)^{\mathrm{\lambda}} 
\min \left\{k, \tilde{k} \right\}  
&\leq \frac{1}{\operatorname{r}(\rho)} \min \left\{k, \operatorname{r}(\rho) \right\} \\
&\leq \int_{0}^{k} f_{\mathrm{\rho}}(x)dx
\end{align*}
for all $k \in \mathbbm{R}_{\geq 0}$, i.e., implies \eqref{eq:74}; 
the second inequality holds as $f_{\rho}(x)$ is monotonously decreasing and normalised.
As $H_{\mathrm{max}}(\rho)= \log \operatorname{r}(\rho)$, taking the infimum over $\lambda$ in \eqref{eq:abcond2} concludes the proof for $\tilde{S}_{+}$.

In the case of an equilibrium state $\rho$,
for $\lambda = S(\rho)$ the relation $((1-\lambda)\rho_{\mathrm{0}}, \lambda \rho_{\mathrm{1}}) \sim_{\mathrm{M}} \rho$ holds. Note that for these states $\log \operatorname{r}(\rho)=\log \frac{1}{p_1}$. Thus, according to the above considerations,
\begin{equation*}
\lambda=C_1 \cdot \log \operatorname{r}(\rho) + C_0,
\end{equation*}
with parameters $C_0$ and $C_1$ defined like before. This also coincides with the von Neumann entropy $H$ (up to an affine change of scale).
\end{proof}
In particular, $H_{\mathrm{min}}$ and $H_{\mathrm{max}}$ are exactly recovered for the choice
$\operatorname{r}(\rho_{\mathrm{0}})=1$ and $\operatorname{r}(\rho_{\mathrm{1}})=2$. States
obeying $\rho_{1} \prec_{\mathrm{M}} \rho$ are associated with a $\lambda > 1$ and thus an entropy larger than $1$, interpreted like in Section~\ref{sec:ly}.

According to Lemma~\ref{lemma:necsuff}, the information-theoretic entropy measures $H_{\mathrm{min}}$ and $H_{\mathrm{max}}$ obtain an additional meaning as characterising the state transformations by means of adiabatic processes in terms of necessary and sufficient conditions.
Note that these entropy measures and thus the $\tilde{S}_{\mathrm{-}}$ and $\tilde{S}_{\mathrm{+}}$ in our particular model have the appealing property that they are additive under the composition of two states $\rho$, $\sigma \in \mathcal{S}(\mathcal{H})$:
\begin{align*} 
\tilde{S}_{\mathrm{-}}(\rho \otimes \sigma)&=\tilde{S}_{\mathrm{-}}(\rho)+\tilde{S}_{\mathrm{-}}(\sigma), \\
\tilde{S}_{\mathrm{+}}(\rho \otimes \sigma)&=\tilde{S}_{\mathrm{+}}(\rho)+\tilde{S}_{\mathrm{+}}(\sigma). \end{align*}
Thus, if $\sigma$ is an equilibrium state,
\begin{align*} 
\tilde{S}_{\mathrm{-}}(\rho \otimes \sigma)&=\tilde{S}_{\mathrm{-}}(\rho)+S(\sigma), \\
\tilde{S}_{\mathrm{+}}(\rho \otimes \sigma)&=\tilde{S}_{\mathrm{+}}(\rho)+S(\sigma).
\end{align*}

As mentioned in Section~\ref{sec:ly}, these bounds $\tilde{S}_{\mathrm{-}}$ and $\tilde{S}_{\mathrm{+}}$ may not equal $S_{\mathrm{-}}$ and $S_{\mathrm{+}}$. With the continuous scaling operation and our understanding of the entropic bounds in the same way as previously explained for the entropy function \eqref{eq:entropyscaled}, the two sets of entropic quantities coincide.

This is further illustrated with the following example of a qubit. Let $\rho_{\mathrm{0}}= \left|0\right\rangle\left\langle  0 \right|$ and $\rho_{\mathrm{1}}=\frac{1}{2} \left|0\right\rangle\left\langle 0 \right|+\frac{1}{2} \left|1 \right\rangle\left\langle 1 \right|$.
Consider a qubit in the state  $\rho=\frac{3}{4} \left|0\right\rangle\left\langle 0 \right|+\frac{1}{4} \left|1 \right\rangle\left\langle 1 \right|$.
Now the closest qubit equilibrium state $\rho'$ preceding it in the order is $\rho'=\rho_{\mathrm{0}}$, a state with entropy $S(\rho')=0$. With the continuous scaling operation, as detailed above, we actually consider
\begin{align*}
\sup \left\{ S(\rho'): \left(\rho', \frac{\mathbbm{1}}{3} \right) \prec \left(\rho, \frac{\mathbbm{1}}{3} \right) \right\}
= \log \frac{4}{3},
\end{align*}
where not $\rho'$ but rather $\left(\rho', \frac{\mathbbm{1}}{3} \right)$ is required to be a physical state. This equals 
\begin{equation*}
\tilde{S}_{\mathrm{-}}(\rho)=\sup \left\{ \lambda: ((1-\lambda) \rho_{\mathrm{0}}, \lambda \rho_{\mathrm{1}}) \prec \rho \right\}=\log \frac{4}{3}.
\end{equation*} 
Note that we may thus distinguish the state $\rho$ from another state $\sigma=\frac{2}{3} \left|0\right\rangle\left\langle 0 \right|+\frac{1}{3} \left|1 \right\rangle\left\langle 1 \right|$ by means of its entropic bounds.

\section{Interaction with a heat bath} \label{sec:results2}
Lieb and Yngvason's framework is based on rather abstract axioms, which admits its application to other physical contexts.
In typical laboratory experiments, the systems of interest interact with a thermal environment. 
Considering systems connected to an external heat reservoir is thus a natural and particularly useful application. In the following, we elaborate on this scenario and prove Proposition~2 from the main text.

States equilibrated with respect to a thermal reservoir at a temperature $T$ are the thermal states of the form \eqref{eq:thermal}. As in the case of adiabatic processes, we also consider states that are (normalised) projections of \eqref{eq:thermal} onto a subspace (i.e., states where only a subset of the energy eigenstates are populated) as members of the class of ``equilibrium states''. Note that the notion of an equilibrium state depends on the Hamiltonian of the system. 

The possible state transitions of a system connected to a thermal reservoir at a temperature $T$ are modelled by the resource theory of thermal operations introduced in Section~\ref{sec:thermal}. According to Proposition~\ref{prop:HO} from~\cite{Horodecki2013_ThermoMaj}, the ordering of states under these operations is encompassed by means of the order relation of thermo-majorization, as long as the considered states are block-diagonal in the energy eigenbasis.

As equilibrium states are always diagonal in the energy eigenbasis, the relation $\prec_{\mathrm{T}}$ fully captures their inter convertibility. For those non-equilibrium states which are not block diagonal in the energy eigenbasis, the relation $\prec_{\mathrm{T}}$ is not sufficient to express whether two states can be inter converted by thermal operations\footnote{More precisely, $\prec_{\mathrm{T}}$ can only relate such non-equilibrium states to states which are block diagonal in the energy eigenbasis and only as long as the state that is not block diagonal is the preceding element in the order, as was previously mentioned in Section~\ref{sec:resources}. Consult~\cite{Horodecki2013_ThermoMaj} for further clarification.}.
In the following, we therefore restrict our treatment to equilibrium states and to non-equilibrium states that are block diagonal in the energy eigenbasis.
The following proposition ensures the applicability of Theorem~\ref{thm:entropythm} and Proposition~\ref{prop:nonequ}
to quantum states ordered with $\prec_{\mathrm{T}}$.
\begin{prop} \label{prop:thermaxioms}
Consider the order relation of thermo-majorization $\prec_{\mathrm{T}}$. Then, for equilibrium states $\tau$ the six axioms E1 to E6 as well as the Comparison Hypothesis hold, 
whereas all block diagonal non-equilibrium states satisfy axioms N1 and N2. 
\end{prop}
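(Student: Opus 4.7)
The strategy is to mimic the proof of Lemma~\ref{lemma:major} step by step, replacing the ordinary spectral step function $f_\rho$ by the Gibbs-rescaled step function $f^{T}_\rho$ from Definition~\ref{definition:Gibbs}. The crucial observation that makes this replacement work is that, by direct computation, any thermal state (and any normalised projection of a thermal state onto an energy subspace) has a \emph{flat} Gibbs-rescaled step function: if $\tau$ is supported on the energy levels indexed by a set $\mathcal{I}$, then $f^{T}_\tau(x)=1/Z_\mathcal{I}$ on $[0,Z_\mathcal{I}]$ and zero elsewhere, where $Z_\mathcal{I}=\sum_{i\in\mathcal{I}}e^{-\beta E_i}$. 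Thus the equilibrium states play exactly the same structural role under $\prec_\mathrm{T}$ as the flat-spectrum states did under $\prec_\mathrm{M}$, with the rank $\operatorname{r}(\rho)$ replaced by the partial partition function $Z_\mathcal{I}$.

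The plan is as follows. First I would verify E1, E2 and E6 for arbitrary block-diagonal states; these are literal transcriptions of the corresponding arguments in Lemma~\ref{lemma:major}, since thermo-majorization is defined via the same type of integral inequality (Definition~\ref{definition:thermom}). For stability (E6), I would add an ancillary equilibrium system $\tau_0,\tau_1$ with small scaling factor $\varepsilon_i\to 0$ and pass the limit through the integrals by dominated convergence, exactly as done for $\prec_\mathrm{M}$. Next I would treat E3 (consistent composition) for block-diagonal states: the Gibbs-rescaled step function of $(\rho,\rho')$ is obtained by reordering the product rescaled eigenvalues $p_i p'_j/(e^{-\beta E_i}e^{-\beta E'_j})$ on the interval of total width $Z Z'$, and the argument by grouping and majorising one factor at a time (first with respect to $\rho'$, then re-indexing and using $\rho \prec_\mathrm{T}\sigma$) carries over verbatim from the proof of E3 in Lemma~\ref{lemma:major}. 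Next, for the equilibrium-only axioms E4, E5 and the Comparison Hypothesis, I would exploit the flat Gibbs-rescaled step functions: scaling a thermal state by $\lambda$ multiplies $Z_\mathcal{I}$ by $Z_\mathcal{I}^{\lambda-1}$ in width and rescales the height in the same way, so flatness is preserved and all comparisons reduce to comparing single numbers $Z_\mathcal{I}^{1-\lambda}Z_\mathcal{J}^{\lambda}$, as in the proof of the Comparison Hypothesis for $\prec_\mathrm{M}$. Finally, for N1, I would pick $\tau_0$ as the rank-one projection onto the ground state and $\tau_1$ as the full Gibbs state of the system; any block-diagonal state $\rho$ is then thermo-majorized by $\tau_0$ and thermo-majorizes $\tau_1$, because $\tau_0$ has the maximal possible first rescaled eigenvalue while $\tau_1$ has the minimal (constant) one.

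I expect the main obstacle to be E3: the Gibbs-rescaling is sensitive not only to the eigenvalues of $\rho$ and $\rho'$ but also to the energies $E_i$ and $E'_j$, and after tensoring one must reorder the products $(p_ip'_j)/(e^{-\beta(E_i+E'_j)})$ in decreasing order before integrating the rescaled step function. One must verify that the grouping argument used for $\prec_\mathrm{M}$ still applies once the intervals have non-uniform widths $e^{-\beta E_i}e^{-\beta E'_j}$; the key point is that the partial sums $\int_0^k f^T_{(\rho,\rho')}$ can still be bounded by decomposing along the blocks of $\rho'$ (for each block, the relevant partial sum is bounded using $\rho'\prec_\mathrm{T}\sigma'$) and then along those of $\rho$ (using $\rho\prec_\mathrm{T}\sigma$). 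A secondary subtlety is the continuous extension of scaling: as in Section~\ref{sec:model}, $\lambda\tau$ is only defined physically for integer $\lambda$, but combining $\tau$ with an appropriately chosen auxiliary thermal state and using the same rewriting rules as around Eq.~\eqref{eq:entropyscaled} yields a consistent continuous scaling under which the axioms remain valid, since all identities involved are polynomial identities for the partial partition functions.
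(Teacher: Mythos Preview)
Your approach is essentially the same as the paper's: reduce to Lemma~\ref{lemma:major} by substituting $f_\rho \to f^{\mathrm{T}}_\rho$ and $\operatorname{r}(\rho)\to Z_\tau$, noting that equilibrium states have flat Gibbs-rescaled step functions. The paper's proof is in fact even terser than your proposal---it simply invokes this substitution and singles out E3 as the one place needing a remark (there, the non-uniform interval widths $e^{-\beta E_i}$ require that the grouping indices $m_i,n_i$ be chosen as sums of Boltzmann factors rather than integers; the paper observes that positivity of the energies, hence $0\leq e^{-\beta E_i}\leq 1$, keeps this unproblematic). Your more explicit discussion of E3 is consistent with this.

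There is, however, a slip in your treatment of N1. You propose $\tau_0$ to be the rank-one projection onto the \emph{ground} state and justify this by saying $\tau_0$ has ``the maximal possible first rescaled eigenvalue''. These two statements contradict each other: for a pure energy eigenstate $|E_i\rangle\langle E_i|$ the rescaled eigenvalue is $e^{\beta E_i}$, which is \emph{minimal} at the ground state and \emph{maximal} at the highest energy level. Concretely, if $E_j>E_0$ then $|E_0\rangle\langle E_0|$ does \emph{not} thermo-majorize $|E_j\rangle\langle E_j|$, since for small $k$ one has $\int_0^k f^{\mathrm{T}}_{|E_j\rangle\langle E_j|}=k\,e^{\beta E_j}>k\,e^{\beta E_0}=\int_0^k f^{\mathrm{T}}_{|E_0\rangle\langle E_0|}$. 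The correct choice (and the one the paper makes, reading its $E_1$ in the ordering convention of Definition~\ref{definition:Gibbs}) is $\tau_0=|E_{\max}\rangle\langle E_{\max}|$, the eigenstate of highest energy, while $\tau_1$ is the full Gibbs state as you say. With this fix your argument for N1 goes through.
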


By Definition~\ref{definition:thermom}, block diagonal states $\rho \in \mathcal{S}(\mathcal{H})$ can be represented by their Gibbs-rescaled step functions $f_{\rho}^{\mathrm{T}}$.
For equilibrium states $\tau$ the functions $f^{\mathrm{T}}_{\mathrm{\tau}}$ take the simple form
\begin{equation*}
f^{\mathrm{T}}_{\mathrm{\tau}}(x)= \begin{cases} \frac{1}{Z_\tau}, &  0 \leq x \leq Z_\tau,\\
 0, &  \textrm{otherwise}, 
\end{cases}
\end{equation*}
where $Z_\tau$ is the partition function on the subspace of $\tau$.
We define the composition of two arbitrary states as their tensor product, like before. 
For scaling factors $\lambda \in \mathbbm{N}$ the scaling of an equilibrium state is again assumed to coincide with its composition.
This scaling operation can be formally extended to any scaling factor $\lambda \in \mathbbm{R}_{>0}$ 
on the level of the functions $f^{\mathrm{T}}$.
The Gibbs-rescaled step function of an equilibrium state $\lambda \tau$ has the form 
\begin{equation*}
f^{\mathrm{T}}_{\mathrm{\lambda \tau}}(x)= 
\begin{cases} \left( \frac{1}{Z_\tau} \right)^{\lambda}, &  0 \leq x \leq Z_\tau^{\lambda},\\
	 0, &  \textrm{otherwise}. 
\end{cases}
\end{equation*}

Note that because these functions are flat, we can, similarly to Section~\ref{sec:results}, give a meaning to scaling with non-integer factors $\lambda$ by considering states on a larger Hilbert space.

The Gibbs-rescaled $f^{\mathrm{T}}_{\mathrm{\rho}}$ are normalized and monotonically decreasing like the $f_{\mathrm{\rho}}$.
Furthermore, for equilibrium states $\tau$ the partition function on the subspace where they live, $Z_\tau$, fully determines which states thermo-majorize which others. More specifically, for
two equilibrium states $\tau$ and $\tilde{\tau}$, 
\begin{align}
\begin{split}
\tau \prec_{\mathrm{T}} \tilde{\tau} &\Leftrightarrow \int_{0}^{k}f^{\mathrm{T}}_{\mathrm{\tau}}(x) dx \ \geq \ \int_{0}^{k} f^{\mathrm{T}}_{\mathrm{\tilde{\tau}}}(x) dx, \quad \forall \ k \in \mathbbm{R}_{\mathrm{\geq 0}}\\
&\Leftrightarrow Z_{\mathrm{\tilde{\tau}}} \geq Z_{\mathrm{\tau}}.
\end{split}\label{eq:partition}
\end{align}
Thus for the $f^{\mathrm{T}}_{\mathrm{\rho}}$ the partition function takes the role of the $\operatorname{rank}$ in $f_{\mathrm{\rho}}$.

\begin{proof}[Proof of Proposition~\ref{prop:thermaxioms}]
Substituting the $f_{\mathrm{\rho}}$ with the $f^{\mathrm{T}}_{\mathrm{\rho}}$ and the $\operatorname{rank}$ with the partition function $Z$, we can apply the proof of Lemma~\ref{lemma:major} to prove the axioms E1 to E6 as well as the 
Comparison Hypothesis for the order relation of thermo-majorization $\prec_{\mathrm{T}}$. 
Note that when adapting the proof of E3, the choice of the $\left\{m_{i}\right\}_{i}$ and the $\left\{n_{i}\right\}_{i}$ is not problematic, 
as all energies are positive and hence $0 \leq e^{-\beta E_i} \leq 1$.  
Furthermore, axiom N1 holds as the spectrum of a state $\rho =  \sum_{i,d_{i},d'_{i}} \rho_{i,d_{i},d'_{i}} \left|E_i,~d_{i} \right\rangle \left\langle E_i,~d'_{i} \right| \in \mathcal{S}(\mathcal{H})$ always 
thermo-majorizes $\tau_{1} =  \sum_{i,d_{i}} \frac{e^{-\beta E_i}}{Z} \left|E_i,d_{i} \right\rangle \left\langle E_i, d_{i} \right|$ and is
thermo-majorized by $\tau_{0} = \left|E_1 \right\rangle \left\langle E_1 \right|$. The $d_i$ and $d'_i$ account for the degeneracy of each energy level.
\end{proof}

As all axioms are satisfied by the order relation $\prec_{\mathrm{T}}$, Theorem~\ref{thm:entropythm} implies that there is a unique entropy function 
\begin{equation*}
S_{\mathrm{T}}(\rho)=\left\{ \lambda : ((1-\lambda) \tau_{\mathrm{0}}, \lambda \tau_{\mathrm{1}}) \sim_{\mathrm{T}} \tau \right\},
\end{equation*} 
which is additive under composition by means of the tensor product, extensive under the scaling operation and monotonic under thermal operations. There are furthermore bounds on the monotonic extensions of this function to other block-diagonal states.

\begin{proof}[Proof of Proposition~2 from the main text]
Let $\rho \in \mathcal{S}(\mathcal{H})$ be block diagonal with rescaled eigenvalues $p_{i}^{\mathrm{res}}=\frac{p_{i}}{e^{-\beta E_i}}$ ordered 
as $p_{1}^{\mathrm{res}} \geq p_{2}^{\mathrm{res}} \geq \ldots \geq p_{d}^{\mathrm{res}}$ with $d=\dim(H)$ and let $\tau_{0} \prec\prec_{\mathrm{T}} \tau_{1}$ 
be two equilibrium states.
Let $\lambda$ be such that $((1-\lambda)\tau_{\mathrm{0}}, \lambda \tau_{\mathrm{1}}) \prec_{\mathrm{T}} \rho$.
Then, 
\begin{equation}
 \int_{0}^{k} f^{\mathrm{T}}_{((1-\lambda)\tau_{\mathrm{0}}, \lambda \tau_{\mathrm{1}})}(x)dx \geq \int_{0}^{k} f^{\mathrm{T}}_{\mathrm{\rho}}(x)dx, \quad \forall k \in \mathbbm{R}_{\mathrm{\geq 0}}. 
\label{eq:71} \end{equation}
As $\tau_{\mathrm{0}}$ and $\tau_{\mathrm{1}}$ are equilibrium states, for any
$0 \leq k \leq Z_{\tau_{\mathrm{0}}}^{\mathrm{1-\lambda}}Z_{\tau_{\mathrm{1}}}^{\mathrm{\lambda}}$
\begin{equation*} 
\int_{0}^{k} f^{\mathrm{T}}_{((1-\lambda)\tau_{\mathrm{0}}, \lambda \tau_{\mathrm{1}})}(x)dx = \int_{0}^{k} \left(\frac{1}{Z_{\mathrm{\tau_{\mathrm{0}}}}} \right)^{\mathrm{1-\lambda}} 
\left(\frac{1}{Z_{\mathrm{\tau_{\mathrm{1}}}}} \right)^{\mathrm{\lambda}} dx. 
\end{equation*} 
Therefore, \eqref{eq:71} implies 
\begin{equation}
\left(\frac{1}{Z_{\mathrm{\tau_{\mathrm{0}}}}} \right)^{\mathrm{1-\lambda}} \left(\frac{1}{Z_{\mathrm{\tau_{\mathrm{1}}}}} \right)^{\mathrm{\lambda}} \geq p_{1}^{\mathrm{res}},
\label{eq:conditionZ1} 
\end{equation}
which can be rewritten as
\begin{equation}
a_{\mathrm{T}} \cdot \ln \frac{1}{p_{1}^{\mathrm{res}}}+ b_{\mathrm{T}} \geq \lambda.
\label{eq:ble} \end{equation}
with $a_{\mathrm{T}}=\frac{1}{\ln \frac{Z_{\mathrm{\tau_{1}}}}{Z_{\mathrm{\tau_{0}}}}}$ and 
$b_{\mathrm{T}}=-a_{\mathrm{T}} \cdot \ln Z_{\mathrm{\tau_{0}}}$ depending on the gauge states $\tau_{\mathrm{0}}$ and $\tau_{\mathrm{1}}$.
On the other hand, \eqref{eq:conditionZ1} implies
\begin{align*}
\left(\frac{1}{Z_{\tau_{\mathrm{0}}}} \right)^{\mathrm{1-\lambda}} \left(\frac{1}{Z_{\tau_{\mathrm{1}}}} \right)^{\mathrm{\lambda}} \min \left\{k, \tilde{k} \right\} 
&\geq p^{\mathrm{res}}_{1} \cdot \min \left\{k, \frac{1}{p^{\mathrm{res}}_{1}}\right\} \\
&\geq \int_{0}^{k} f^{\mathrm{T}}_{\mathrm{\rho}}(x)dx , 
\end{align*}
for all $k \in \mathbbm{R}_{\geq 0}$; the second inequality holds as $f^{\mathrm{T}}_{\rho}(x)$ is monotonously decreasing and normalised.
Hence, conditions \eqref{eq:71} and \eqref{eq:conditionZ1} are equivalent.

For a state $\rho$ that is block diagonal in the energy eigenbasis 
\begin{align*}
\begin{split} F_{\mathrm{max}}(\rho) &= -k_{\mathrm{B}}T \ln Z_{\mathrm{\tau}}+k_{\mathrm{B}}TD_{\mathrm{\infty}}(\rho||\tau) \ln(2)\\
&= -k_{\mathrm{B}}T \ln Z_{\mathrm{\tau}}+ k_{\mathrm{B}}T \ln \min \{\lambda: \rho \leq \lambda \tau\}\\
&= k_{\mathrm{B}}T \ln \min \{\mu: \rho \leq Z_{\tau} \mu \tau \}\\
&= k_{\mathrm{B}}T \ln p_{\mathrm{max}}^{\mathrm{res}}
,\end{split}
\end{align*}
where $p_{\mathrm{max}}^{\mathrm{res}}$ is the maximal rescaled eigenvalue of the state $\rho$. 
Thus, taking the supremum over $\lambda$ in \eqref{eq:ble} implies that $\tilde{S}_{\mathrm{T-}}(\rho)=F_{\mathrm{max}}(\rho)$ up 
to affine change of scale.

\vspace{\baselineskip}

For $\tilde{S}_{\mathrm{T+}}$ the proof works similarly.
Let $\rho \in \mathcal{S}(\mathcal{H})$ and let $\tau_{0} \prec\prec_{T} \tau_{1}$ be two equilibrium states. 
Now let $\lambda$ be such that $\rho \prec_{\mathrm{T}} ((1-\lambda)\tau_{\mathrm{0}},\lambda\tau_{\mathrm{1}})$ and thus 
\begin{equation} 
\int_{0}^{k} f^{\mathrm{T}}_{((1-\lambda)\tau_{\mathrm{0}}, \lambda\tau_{\mathrm{1}})}(x)dx
\leq  \int_{0}^{k} f^{\mathrm{T}}_{\mathrm{\rho}}(x)dx, \quad \forall \ k \in \mathbbm{R}_{\mathrm{\geq 0}}. \label{eq:blo}
\end{equation}
First we show by contradiction that 
\begin{equation}
Z_{\tau_{\mathrm{0}}}^{\mathrm{1-\lambda}} Z_{\tau_{\mathrm{1}}}^{\lambda} \geq Z_{\rho}.
\label{eq:contr} \end{equation}
Assume for now that $Z_{\tau_{\mathrm{0}}}^{\mathrm{1-\lambda}} Z_{\tau_{\mathrm{1}}}^{\lambda} < Z_{\rho}$. 
For $\tilde{k} = Z_{\tau_{\mathrm{0}}}^{\mathrm{1-\lambda}} Z_{\tau_{\mathrm{1}}}^{\lambda}$ we therefore find
\begin{equation*}
\int_{0}^{\tilde{k}} f^{\mathrm{T}}_{\mathrm{\rho}}(x)dx < 1.
\end{equation*}
This contradicts \eqref{eq:blo} as 
\begin{equation*}
\int_{0}^{\tilde{k}} f^{\mathrm{T}}_{((1-\lambda)\tau_{\mathrm{0}}, \lambda\tau_{\mathrm{1}})}(x)dx =1.
\end{equation*}
Thus we have $\tilde{k} \geq Z_{\rho}$, which can be rewritten as 
\begin{equation} 
\lambda \geq a_{\mathrm{T}} \cdot \ln Z_{\rho} + b_{\mathrm{T}},\label{eq:bleu}
\end{equation}
with $a_{\mathrm{T}}$ and $b_{\mathrm{T}}$ defined as above.
Moreover, \eqref{eq:contr} implies 
\begin{align*}
\left(\frac{1}{Z_{\tau_{\mathrm{0}}}} \right)^{\mathrm{1-\lambda}} \left(\frac{1}{Z_{\tau_{\mathrm{1}}}} \right)^{\mathrm{\lambda}} 
\min \left\{k, Z_{\tau_{\mathrm{0}}}^{\mathrm{1-\lambda}} Z_{\tau_{\mathrm{1}}}^{\lambda} \right\} 
&\leq \frac{1}{Z_{\rho}} \min \left\{k, Z_{\rho} \right\} \\
&\leq \int_{0}^{k} f^{\mathrm{T}}_{\mathrm{\rho}}(x)dx,
\end{align*}
for all $k \in \mathbbm{R}_{\geq 0}$, i.e., it implies \eqref{eq:blo}; the second inequality holds as $f^{\mathrm{T}}_{\mathrm{\rho}}(x)$ 
is monotonously decreasing and normalized.

Furthermore, we find that
\begin{align*} 
F_{\mathrm{min}}(\rho) &= - k_{\mathrm{B}}T\ln Z_{\mathrm{\tau}}+k_{\mathrm{B}}TD_{\mathrm{0}}(\rho||\tau) \ln(2) \\
&=-k_{\mathrm{B}}T\ln Z_{\mathrm{\tau}} - k_{\mathrm{B}}T\ln \operatorname{tr} \Pi_{\mathrm{\rho}} \tau \\
&=-k_{\mathrm{B}}T\ln\left( Z_{\tau} \operatorname{tr} \Pi_{\mathrm{\rho}} \tau \right)\\
&=-k_{\mathrm{B}}T\ln Z_{\rho},
\end{align*}
where $\Pi_{\mathrm{\rho}}$ denotes the projector onto the support of $\rho$, as before. 
Taking the infimum over $\lambda$ in \eqref{eq:bleu} implies that  $\tilde{S}_{\mathrm{T+}}(\rho)=F_{\mathrm{min}}(\rho)$ up to affine change of scale and concludes the proof for $\tilde{S}_{\mathrm{T+}}$.

For an equilibrium state $\tau$ and for $\lambda= S_{\mathrm{T}}(\tau)$ the relation $((1-\lambda) \tau_{\mathrm{0}}, \lambda \tau_{\mathrm{1}}) \sim_{\mathrm{T}} \tau $ holds. Furthermore, $\ln Z_\tau = \ln \frac{1}{p_\mathrm{max}^{\mathrm{res}}}$ holds for such states. Hence, according to the above considerations, 
\begin{equation*}
\lambda=a_{\mathrm{T}} \cdot \ln Z_{\mathrm{\tau}} + b_{\mathrm{T}}.
\end{equation*}
This concludes the proof.
\end{proof}
The function $S_{\mathrm{T}}(\tau)$ equals the Helmholtz free energy 
\begin{equation*} 
F(\tau) = - k_{\mathrm{B}} T \ln Z_{\tau} 
\end{equation*}
up to an affine change of scale and a factor $-1$.\footnote{
Note that one would require two gauge states with partition functions $Z_{\tau_{\mathrm{0}}}=1$ and $Z_{\tau_{\mathrm{1}}}=e^{-\beta}$ to formally recover $S_{\mathrm{T}}(\tau)=F(\tau)$. 
This choice of parameters, however, would obey $\tau_{\mathrm{1}} \prec \prec_{\mathrm{T}} \tau_{\mathrm{0}}$ and not the required
$\tau_{\mathrm{0}} \prec\prec_{\mathrm{T}} \tau_{\mathrm{1}}$.}
This additional factor reflects that $F$ is not monotonically increasing but rather decreasing under thermal operations.

That the quantities $\tilde{S}_{\mathrm{T-}}$ and $\tilde{S}_{\mathrm{T+}}$ calculated within our axiomatic approach are related to the work of formation $F_{\mathrm{max}}$ as well as to the extractable work $F_{\mathrm{min}}$ from~\cite{Horodecki2013_ThermoMaj}, is perhaps not surprising, since both approaches employ the tools of majorization and thermo-majorization.

\section{Interaction with other types of reservoirs} \label{sec:results3}
Adding other reservoirs to a physical system leads to mathematically equivalent situations, even though the underlying physics differs. Here, we outline the two other scenarios from Table~I of the main text.

\subsection{Interaction with a heat and a particle reservoir}
In addition to a heat bath, considering a particle reservoir is a common practice, especially in statistical physics. 
In the context of resource theories for quantum thermodynamics, such scenarios have been looked at in~\cite{Halpern2014}.

Systems in contact with both a heat and a particle reservoir, have equilibrium states 
\begin{equation}\label{eq:actualequil}
\rho= \sum_{E,N}\frac{e^{ -\beta(E-N\mu)}}{\mathcal{Z}}\left| E,~N \right\rangle \left\langle E,~N \right|
\end{equation}
in the eigenbasis $\left\{\left| E,~N \right\rangle \right\}_{E,~N}$, where $E$ and $N$ denote energy and particle number respectively and $\mathcal{Z}$ is the grand canonical partition function. Like in the case of a heat bath, we also include states for which only a subset of the eigenstates are populated as ``equilibrium states''.~\footnote{For simplicity we assume that there are only particles of one kind.} 

For quantum states $\rho$ that are block diagonal in the energy-particle eigenbasis, the possibility of such processes can be expressed by an order relation $\prec_{\mathrm{N,T}}$, 
which consists again of a rescaling followed by majorization. 
Analogous to the Gibbs-rescaling, we can define a rescaled step function $f_{\mathrm{\rho}}^{\mathrm{N,T}}$. 
\begin{definition}
Consider a block diagonal state in $\mathcal{S}(\mathcal{H})$:
\begin{equation*}
\rho=\sum_{\substack{E,N,n_{E,N},\\  n'_{E,N}}} \rho_{E,N,n_{E,N},n'_{E,N}} \left|E,N,n_{E,N} \right\rangle \left\langle E,N,n'_{E,N} \right|.
\end{equation*}
Its spectrum can be denoted by a step function according to \eqref{eq:stepfunction}. Then, the
 \textbf{N,T-rescaled step function} of $\rho$ is 
\begin{widetext}
\begin{equation*} 
f_{\mathrm{\rho}}^{\mathrm{N,T}}(x)=\begin{cases} \frac{p_{i}}{e^{-\beta (E_i-N_i\mu)}}, &  \sum_ {k=1}^{i-1} e^{-\beta(E_k-N_k\mu)} \leq x \leq \sum_ {k=1}^{i} e^{-\beta(E_k-N_k \mu)},\\
0, &  \textrm{otherwise.} 
\end{cases}
\end{equation*}
\end{widetext}
\end{definition}
For an equilibrium state $\rho$ this is
\begin{equation*}
f^{\mathrm{N,T}}_{\mathrm{\rho}}(x)= \begin{cases} \frac{1}{\mathcal{Z}}, &  0 \leq x \leq \mathcal{Z},\\
0, &  \textrm{otherwise,} 
\end{cases}
\end{equation*}
where $\mathcal{Z}$ is the grand canonical partition function of the occupied states.
\begin{definition}
Let $\rho$, $\sigma \in \mathcal{S}(\mathcal{H})$ be two states which are block diagonal in the basis 
$\left\{ \left| E,~N \right\rangle \right\}_{E,~N}$.
The order of \textbf{N,T-majorization $\prec_{\mathrm{N,T}}$} is defined as
\begin{equation*}
\rho \prec_{\mathrm{N, T}} \sigma \, \Leftrightarrow \, \int_{0}^{k} f^{\mathrm{N,T}}_{\mathrm{\rho}}(x) dx \geq \int_{0}^{k} 
f^{\mathrm{N,T}}_{\mathrm{\sigma}}(x) dx, \: \forall k \in \mathbbm{R}_{\mathrm{\geq 0}}.
\end{equation*}
\end{definition}
This is analogous to Definition \ref{definition:thermom} but with an adapted rescaling operation.
In analogy to Section~\ref{sec:results2}, the operations corresponding to this order relation $\prec_{\mathrm{N,T}}$ are:
\begin{itemize}
\item composition with an ancillary systems in a state of the form \eqref{eq:actualequil};
\item unitary transformation conserving the total energy and particle number of system and ancilla;
\item removal of any subsystem.
\end{itemize}
For equilibrium states -- which are always diagonal in the basis $\left\{\left| E,~N \right\rangle \right\}_{E,~N}$-- this treatment suffices, whereas for those non-equilibrium states which contain coherent superpositions of such eigenstates it does not apply.

As for $\prec_{\mathrm{T}}$, the relation $\prec_{\mathrm{N,T}}$ fulfills Lieb and Yngvason's axioms. According to Theorem \ref{thm:entropythm} there is thus a unique potential
\begin{equation*} S_{\mathrm{N,T}}(\rho)= \left\{ \lambda : ((1-\lambda) \rho_{\mathrm{0}}, \lambda \rho_{\mathrm{1}}) \sim_{\mathrm{N,T}} \rho \right\} 
\end{equation*}
for equilibrium states.
\begin{prop}
For an equilibrium state $\rho$ 
the unique function $S_{\mathrm{N,T}}$ is $$S_{\mathrm{N,T}}(\rho)= a_{\mathrm{N,T}} \cdot \ln \mathcal{Z} + b_{\mathrm{N,T}}$$ with $a_{\mathrm{N,T}}>0$.
\end{prop}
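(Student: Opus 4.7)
The plan is to mirror the proof structure used for Proposition~2 (interaction with a heat bath), with the grand canonical partition function $\mathcal{Z}$ playing the role formerly played by $Z_\tau$, and the rescaling weight $e^{-\beta(E_i - N_i\mu)}$ replacing $e^{-\beta E_i}$. The key technical observation is that for an equilibrium state $\rho$ of the form~\eqref{eq:actualequil} (possibly restricted to a subset of populated eigenstates), the N,T-rescaled step function $f^{\mathrm{N,T}}_\rho$ is flat, taking the value $1/\mathcal{Z}$ on $[0,\mathcal{Z}]$ and zero elsewhere. Hence, exactly as in~\eqref{eq:partition}, one has $\rho \prec_{\mathrm{N,T}} \sigma \Leftrightarrow \mathcal{Z}_\sigma \geq \mathcal{Z}_\rho$ for equilibrium states, i.e.\ the grand canonical partition function takes over the role of the rank.

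First, I would verify that $\prec_{\mathrm{N,T}}$ satisfies axioms E1--E6 and the Comparison Hypothesis for equilibrium states of the stated form. This follows the argument of Proposition~\ref{prop:thermaxioms} essentially verbatim, because the proof of Lemma~\ref{lemma:major} only uses (i) that the relation is defined through an integrated inequality on monotonically decreasing, normalised step functions, and (ii) that equilibrium states have flat rescaled step functions characterised by a single positive scalar. Both hold here with $\mathcal{Z}$ in place of $Z_\tau$; the definition of scaling extends analogously via $f^{\mathrm{N,T}}_{\lambda\rho}(x) = (1/\mathcal{Z})^\lambda$ on $[0,\mathcal{Z}^\lambda]$. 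The only point worth checking carefully is the analogue of E3, where the weights $e^{-\beta(E-N\mu)}$ need not lie in $[0,1]$ (since $N\mu$ can be large); however the argument only requires that the rescaled eigenvalues are nonnegative and that rescaled step functions are still monotonically decreasing and normalised, which remains true.

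Given the axioms, Theorem~\ref{thm:entropythm} yields a unique (up to affine change of scale) function $S_{\mathrm{N,T}}$ on equilibrium states, defined by $((1-\lambda)\rho_0,\lambda\rho_1)\sim_{\mathrm{N,T}}\rho$ with $\rho_0 \prec\prec_{\mathrm{N,T}} \rho_1$. Because equilibrium states are characterised by their grand canonical partition function, this equivalence reduces via the analogue of~\eqref{eq:partition} to $\mathcal{Z}_{\rho_0}^{1-\lambda}\mathcal{Z}_{\rho_1}^{\lambda} = \mathcal{Z}_\rho$. Taking the natural logarithm and solving for $\lambda$ gives
\begin{equation*}
\lambda \;=\; \frac{1}{\ln(\mathcal{Z}_{\rho_1}/\mathcal{Z}_{\rho_0})}\,\ln \mathcal{Z}_\rho \;-\; \frac{\ln \mathcal{Z}_{\rho_0}}{\ln(\mathcal{Z}_{\rho_1}/\mathcal{Z}_{\rho_0})},
\end{equation*}
which is precisely of the claimed form $S_{\mathrm{N,T}}(\rho) = a_{\mathrm{N,T}}\cdot \ln \mathcal{Z} + b_{\mathrm{N,T}}$. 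Positivity of $a_{\mathrm{N,T}}$ follows because $\rho_0 \prec\prec_{\mathrm{N,T}} \rho_1$ forces $\mathcal{Z}_{\rho_1} > \mathcal{Z}_{\rho_0}$, making the denominator positive.

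The main obstacle I anticipate is not in the final algebraic step but in the verification of the Lieb--Yngvason axioms for $\prec_{\mathrm{N,T}}$, specifically consistent composition (E3), because composing subsystems with different energy-particle spectra produces a product of rescaling weights $e^{-\beta((E+E')-(N+N')\mu)}$ whose reordering into a monotonically decreasing rescaled spectrum must interact correctly with the combinatorial argument used in the proof of Lemma~\ref{lemma:major}. Once this is carried through --- using that the $E-N\mu$ terms combine additively across tensor products just as the $E_i$ did in the thermal case --- the rest of the proof is a direct transcription of the Proposition~2 argument, and no new ideas are required for the final statement, which concerns only equilibrium states.
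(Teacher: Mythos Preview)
Your proposal is correct and takes essentially the same approach as the paper: reduce the defining relation $((1-\lambda)\rho_0,\lambda\rho_1)\sim_{\mathrm{N,T}}\rho$ for equilibrium states to the scalar equation $\mathcal{Z}_{\rho_0}^{1-\lambda}\mathcal{Z}_{\rho_1}^{\lambda}=\mathcal{Z}_\rho$ via the flat rescaled step functions, then solve for $\lambda$. The paper asserts the axioms for $\prec_{\mathrm{N,T}}$ just before the proposition rather than re-verifying them inside the proof, so your discussion of E3 is extra caution (and a fair point, since the paper's remark that $0\le e^{-\beta E_i}\le 1$ in the thermal case does not carry over) but not a departure from the paper's argument.
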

\begin{proof} 
Let $\rho \in \mathcal{S}(\mathcal{H})$ be an equilibrium state and let the equilibrium states $\rho_{\mathrm{0}} \prec\prec_{\mathrm{N,T}} \rho_{\mathrm{1}} \in \mathcal{S}(\mathcal{H})$ define a gauge. 
Then for $\lambda=S_{\mathrm{N,T}}(\rho)$, the relation 
$((1-\lambda) \rho_{\mathrm{0}}, \lambda \rho_{\mathrm{1}}) \sim_{N,T} \rho$ holds and is equivalent to
\begin{equation*}
 \int_{0}^{k} f^{\mathrm{N,T}}_{ \mathrm{((1-\lambda) \rho_{0}, \lambda \rho_{1})}}(x) dx = \int_{0}^{k} f^{\mathrm{N,T}}_{\mathrm{\rho}}(x) dx, \
 \forall \ k \in \mathbbm{R}_{\mathrm{\geq 0}}.
\end{equation*}
As for equilibrium states $\rho \prec_{\mathrm{N,T}} \tilde{\rho} \Leftrightarrow \mathcal{Z}_{\tilde{\rho}} \geq \mathcal{Z}_{\rho}$ this is equivalent to
\begin{equation*} 
\mathcal{Z}_{\mathrm{\rho_{0}}}^{1-\lambda} \mathcal{Z}_{\mathrm{\rho_{1}}}^{\lambda} = \mathcal{Z}_{\mathrm{\rho}}
\end{equation*}
and can be written as
\begin{align*} \begin{split}
\lambda &= \frac{1}{\ln \frac{\mathcal{Z}_{\mathrm{\rho_{1}}}}{\mathcal{Z}_{\mathrm{\rho_{0}}}}} \ln \mathcal{Z}_{\mathrm{\rho}} - \frac{1}{\ln \frac{\mathcal{Z}_{\mathrm{\rho_{1}}}}{\mathcal{Z}_{\mathrm{\rho_{0}}}}}\ln \mathcal{Z}_{\mathrm{\rho_{0}}}\\
&= a_{\mathrm{N,T}} \cdot \ln \mathcal{Z}_{\mathrm{\rho}} + b_{\mathrm{N,T}},
\end{split}\end{align*}
where $a_{\mathrm{N,T}}=\frac{1}{\ln \frac{\mathcal{Z}_{\mathrm{\rho_{1}}}}{\mathcal{Z}_{\mathrm{\rho_{0}}}}}$ and  
$b_{\mathrm{N,T}}=-a_{\mathrm{N,T}} \cdot \ln \mathcal{Z}_{\mathrm{\rho_{0}}}$.
\end{proof}

The ``entropy function'', i.e., the potential for a system in contact with a heat bath and a particle reservoir is related to the grand potential $\Omega$, since
\begin{equation*} 
\Omega = - k_{\mathrm{B}} T \ln \mathcal{Z}.
\end{equation*}

For non-equilibrium states, the bounding functions $\tilde{S}_{\mathrm{N,T -}}$ and $\tilde{S}_{\mathrm{N,T+}}$ can be calculated analogously to the scenario 
including only a heat bath. They define bounds $\Omega_{\mathrm{max}}$ and $\Omega_{\mathrm{min}}$ on the potential $\Omega$ for non-equilibrium states. 
These are operationally related to the formation and destruction of a state in this scenario, again analogously to the case of a heat bath.

\subsection{Interaction with an angular momentum reservoir} \label{sec:angular}
In the study of Landauer's principle~\cite{Landauer1961_5392446Erasure}, the question as to whether energy should obtain a special 
role among the conserved quantities or 
whether processes such as erasure could also be realized at an angular momentum instead of an energy cost was raised in Refs.~\cite{Barnett2013_beyond,Vaccaro2011} and 
answered in the affirmative. 
We observe here that systems connected to an angular momentum reservoir can be included in the axiomatic framework.

As the consideration of an angular momentum reservoir is not common practice, we first sketch the concrete model of a spin reservoir from Refs.~\cite{Barnett2013_beyond,Vaccaro2011}. 
The reservoir consists of N mobile spin-$\frac{1}{2}$ particles, for which the possible spin states are denoted by $\left| 0 \right\rangle$ and $\left| 1 \right\rangle$.  
The spin states are assumed to be degenerate in energy and thus decoupled from the spatial degrees of freedom, which are in equilibrium with a heat bath. 
The equilibrium probability for the reservoir to be in a particular state with $n$ particles in state $\left| 1 \right\rangle$ and $N-n$ particles in state $\left| 0 \right\rangle$ is 
\begin{equation*} 
p_{\mathrm{n}}=\frac{e^{-n \hbar \gamma}}{(1+e^{- \hbar \gamma})^{N}}, 
\end{equation*}
where the parameter $\gamma$ is the analogue of the inverse temperature $\beta$ encountered in the context of a heat bath. 
As for each value $n$ there are $N \choose n$ such reservoir states, the normalization is given as $Z_{\mathrm{J}}^{\mathrm{res}}=(1+e^{- \hbar \gamma})^{N}$ and
has the form of a partition function for the angular momentum reservoir.
This construction allows us to consider an angular momentum reservoir of arbitrary size $N$. 
In the following, we consider a reservoir in the limit $N \rightarrow \infty$.

The state of a system in contact with a spin angular momentum reservoir can be described by a density operator. 
Again, we restrict our considerations to states 
that are block diagonal, this time in the eigenbasis of the z-component of the spin operator, denoted $\{\left| J \right\rangle\}_{J}$.
To ensure that energy does not affect our considerations, we assume all spin-levels $\left| J \right\rangle$ to be energetically degenerate.
A system in equilibrium with the reservoir is described by a density operator of the form
\begin{equation}\label{eq:jequ}
\rho=\sum_{J} \frac{e^{-J \hbar \gamma}}{Z_{\mathrm{J}}} \left| J \right\rangle \left\langle J \right|
\end{equation}
with partition function $Z_{\mathrm{J}}=\sum_{i} e^{- J \hbar \gamma}$. (Note that we include states for which only a subset of the eigenstates are populated, as long as the projection onto the corresponding subspace has this form.)
The spectra of block diagonal states can be represented in terms of rescaled step functions.

\begin{definition} 
Let 
\begin{equation*}
\rho=\sum_{J,n_J,n'_J} \rho_{J,n_J,n'_J} \left| J,~n_J \right\rangle \left\langle J,~n'_J \right|
\end{equation*}
be a density matrix block diagonal in the eigenbasis of the z-component of the spin operator. Represent its spectrum according to~\eqref{eq:stepfunction}. 
Then the \textbf{J-rescaled step function} of $\rho$ is defined as
\begin{equation*}
f_{\mathrm{\rho}}^{J}(x) \!=\!
\begin{cases}\frac{p_{\mathrm{i}}}{ e^{-J(i)} \hbar \gamma}, & \! \sum_ {k=1}^{i-1}  e^{-J{(k)} \hbar \gamma} \! \leq \! x \! \leq \! \sum_ {k=1}^{i}  e^{-J{(k)} \hbar \gamma},\\
0, & \! \textrm{otherwise.} 
\end{cases}
\end{equation*}
\end{definition}
For an equilibrium state $\rho$ this simplifies to 
\begin{equation*} 
f_{\rho}^{J}(x)=
\begin{cases}\frac{1}{Z_{\mathrm{J}}}, &  0 \leq x \leq Z_{\mathrm{J}},\\
0, &  \textrm{otherwise.} 
\end{cases} 
\end{equation*}

We describe the processes on a system in contact with an angular momentum reservoir as:
\begin{itemize} 
\item composition with an ancillary system in an equilibrium state of the form \eqref{eq:jequ};
\item unitary transformation of system and ancilla conserving angular momentum;
\item removal of any subsystem.
\end{itemize}
These lead to an ordering of the states by means of an order relation $\prec_{\mathrm{J}}$.

\begin{definition} Let $\rho$, $\sigma \in \mathcal{S}(\mathcal{H})$ be two states which are block diagonal in the basis $\left\{ \left| J \right\rangle \right\}_{J}$.
The relation of \textbf{J-majorization $\prec_{\mathrm{J}}$} is defined as
\begin{equation*}
\rho \prec_{\mathrm{J}} \sigma \quad \Leftrightarrow \quad \int_{0}^{k} f^{\mathrm{J}}_{\mathrm{\rho}}(x) dx \geq \int_{0}^{k} 
f^{\mathrm{J}}_{\mathrm{\sigma}}(x) dx, \quad \forall k \in \mathbbm{R}_{\mathrm{\geq 0}}.
\end{equation*}
\end{definition}

As in the case of a heat bath, the rescaled majorization relation $\prec_{\mathrm{J}}$ fulfills axioms E1 to E6 for equilibrium states and N1 and N2 for non-equilibrium states. 
This gives rise to a unique potential, 
\begin{equation*} 
S_{\mathrm{J}} \propto \ln Z_{\mathrm{J}},
\end{equation*}
for equilibrium states.
For non-equilibrium states, the two quantities $\tilde{S}_{\mathrm{J-}}$ and $\tilde{S}_{\mathrm{J+}}$ provide necessary conditions as well as a sufficient condition for state transformations under the above operations.

We have thus found an angular momentum based resource theory corresponding to the order relation $\prec_{\mathrm{J}}$, 
for which $S_{\mathrm{J}}$, $\tilde{S}_{\mathrm{J-}}$ and $\tilde{S}_{\mathrm{J+}}$ are monotones. 
In agreement with~\cite{Barnett2013_beyond,Vaccaro2011},
energy can be substituted with angular momentum in our resource theoretic picture.
------------------------------------------------------------------------------

\end{document}